\theoremstyle{plain}
\newtheorem{Theorem}{Theorem}[section]
\newtheorem{Proposition}[Theorem]{Proposition}
\newtheorem{Corollary}[Theorem]{Corollary}
\newtheorem{Lemma}[Theorem]{Lemma}
\theoremstyle{definition}
\newtheorem{Definition}[Theorem]{Definition}
\theoremstyle{remark}
\title{The interplay of different metrics for the construction of constant dimension codes}
\author[S.~Kurz]{}
\subjclass{Primary: 51E23, 05B40; Secondary: 11T71, 94B25.}
\keywords{Galois geometry, subspace distance, constant dimension codes, subspace codes, random linear network coding.}
 \email{sascha.kurz@uni-bayreuth.de}
\newcommand{\cB}{\mathcal{B}}
\newcommand{\cD}{\mathcal{D}}
\newcommand{\cF}{\mathcal{F}}
\newcommand{\cG}{\mathcal{G}}
\newcommand{\cL}{\mathcal{L}}
\newcommand{\cM}{\mathcal{M}}
\newcommand{\cN}{\mathcal{N}}
\newcommand{\F}{\mathbb{F}}
\newcommand{\N}{\mathbb{N}}
\newcommand{\cdc}{\texttt{CDC}}
\newcommand{\FDRM}{\texttt{FDRM}}
\newcommand{\MRD}{\texttt{MRD}}
\newcommand{\cC}{\mathcal{C}}
\newcommand{\cS}{\mathcal{S}}
\newcommand{\cV}{\mathcal{V}}
\newcommand{\ds}{d_S}
\newcommand{\dham}{d_H}
\newcommand{\dr}{d_R}
\newcommand{\rb}{\color{red}{\bullet}}
\newcommand{\bb}{\color{blue}{\bullet}}
\newcommand{\wt}{\operatorname{wt}}
\newcommand{\rk}{\operatorname{rk}}
\newcommand{\supp}{\operatorname{supp}}
\def\drank{d_{\text{r}}}
\newcommand{\gaussmnum}[3]{\left[\begin{smallmatrix}{#1}\\{#2}\end{smallmatrix}\right]_{#3}}
\begin{document}
\maketitle

% Enter the first author's name and address:
\centerline{\scshape Sascha Kurz}
\medskip
{\footnotesize
 % please put the address of the second  and third author
 \centerline{Mathematisches Institut, Universit\"at Bayreuth, D-95440 Bayreuth, Germany}
}

%The abstract of your paper
\begin{abstract}
  A basic problem for constant dimension codes is to determine the maximum possible size $A_q(n,d;k)$ of a set of $k$-dimensional subspaces in $\F_q^n$, called 
  codewords, such that the subspace distance satisfies $\ds(U,W):=2k-2\dim(U\cap W)\ge d$ for all pairs of different codewords $U$, $W$. Constant dimension codes 
  have applications in e.g.\ random linear network coding, cryptography, and distributed storage. Bounds for $A_q(n,d;k)$ are the topic of many recent research papers. 
  Providing a general framework we survey many of the latest constructions and show the potential for further improvements. As examples we give improved 
  constructions for the cases $A_q(10,4;5)$, $A_q(11,4;4)$, $A_q(12,6;6)$, and $A_q(15,4;4)$. We also derive general upper bounds for subcodes arising in those 
  constructions.  
\end{abstract}

\section{Introduction}
Let $\F_q$ be the finite field with $q$ elements, i.e., $q$ is a prime power. For two integers $0\le k\le n$ we denote by 
$\cG_q(n,k)$ the set of all $k$-dimensional subspaces in $\F_q^n$. The so-called subspace distance $\ds(U,W):=\dim(U)+\dim(W)-2\dim(U\cap W)=2k-2\dim(U\cap W)$ 
defines a metric on $\cG_q(n,k)$. A subset $\cC\subseteq\cG_q(n,k)$ is called a \emph{constant dimension code} (\cdc) and its elements are called codewords.  
The \emph{minimum (subspace) distance} of a {\cdc} $\cC$ is defined as $\ds(\cC)=\min\!\left\{\ds(U,W)\,:\, U,W\in \cC,U\neq W\right\}$. We call $\cC$ an $(n,M,d,k)_q$ {\cdc} 
if $\cC$ has cardinality $M$ and $\ds(\cC)\ge d$. The maximum possible cardinality of an $(n,M,d,k)_q$ {\cdc} is denoted by $A_q(n,d;k)$. We refer to 
the recurrently updated survey \cite{TableSubspacecodes} and the associated webpage \texttt{subspacecodes.uni-bayreuth.de} for some of the latest bounds. For $2k\le n$ 
and $d\ge 4$ the general bounds  
\begin{equation}
  \label{ie_cdc_lb_ub}
  q^{(n-k)\cdot (k-d/2+1)}\le A_q(n,d;k)\le 1.7314\cdot q^{(n-k)\cdot (k-d/2+1)} 
\end{equation}
are known, see \cite[Proposition 8]{heinlein2017asymptotic} for the details and further improvements depending on $q$, $k$, and $d$. For some applications 
the factor of at most $1.7314$ between the lower and upper bounds is sufficiently good. As applications are 	manifold, including e.g.\ random linear network coding, cryptography, 
and distributed storage, see e.g.\ \cite{greferath2018network}, we are interested in exact values or relatively tight bounds for $A_q(n,d;k)$ for specific, mostly small, parameters.

With respect to recent improved constructions we mention e.g.\ \cite{chen2020new,cossidente2019combining,feng2020bounds,he2020construction,he2020improving,he2021new,kurz2020lifted,
lao2020parameter,li2019construction,liu2019parallel,niu2020new}. Most of the contained improvements fit into a general framework of a combination of subcodes of a 
specific shape that we will present here. All constructions are based on an interplay between the subspace, the Hamming, and the rank metric distance.

Besides structuring and classifying the recent progress we show further potential for improvements. As examples we give improved  
constructions for the cases $A_q(10,4;5)$, $A_q(11,4;4)$, $A_q(12,6;6)$, and $A_q(15,4;4)$. Note that the dimensions of the ambient spaces are rather small. We also give 
general upper bounds for the mentioned subcodes with special shapes. 

The remaining part of this paper is structured as follows. In Section~\ref{sec_preliminaries} we introduce the necessary preliminaries and review constructions from the literature. 
The impact of codes in the Hamming metric is discussed in Subsection~\ref{subsec_skeleton_codes}. Here we generalize the notion of skeleton codes from the Echelon--Ferrers construction.  
These codes are mainly used to describe and control the combination of different subcodes to a constant dimension code. For the contained subcodes the rank metric plays an 
important role for the construction, see Subsection~\ref{subsec_rank_metric}. Based on the underlying general construction strategy sufficient conditions for adding 
further codewords are described in Subsection~\ref{subsec_additional_codewords}. In Subsection~\ref{subsec_special_constructions} we mention a few constructions outside this scheme, 
which can nevertheless be used as subcodes. At the end of Section~\ref{sec_preliminaries} the most important abbreviations and notation are listed in Table~\ref{table_notation}. 
We summarize our four exemplary improvements, that are parametric in the field size $q$, in Section~\ref{sec_improved_constructions}.    
We have chosen examples with rather small parameters and focus on the underlying techniques to show the potential for further and similar improvements for larger parameters. Upper 
bounds for the occurring subcodes are the topic of Section~\ref{sec_upper_bounds}. In Section~\ref{sec_open_problems} we recollect open problems for 
further research mentioned in Section~\ref{sec_improved_constructions} and Section~\ref{sec_upper_bounds}.

\section{Preliminaries and review of constructions from the literature}
\label{sec_preliminaries}
Let $\cC$ be a {\cdc} consisting of $k$-dimensional subspaces $U\in\cG_q(n,k)$. Given a non-degenerate bilinear form, we denote
by $U^\perp$ the orthogonal subspace of a subspace $U$, which then has dimension $n-\dim(U)$. With this, we have $\ds(U,W)=\ds(U^\perp,W^\perp)$,
so that $A_q(n,d;k)=A_q(n,d;n-k)$. Using this relation we will mostly assume $2k\le n$ in the following, so that the maximum 
possible subspace distance is $2k$. 

As a representation for a codeword $U\in\cC$ we use 
generator matrices $M\in\F_q^{k\times n}$ whose $k$ rows form a basis of $U$ and write $U=\langle M\rangle$. Applying the Gaussian 
elimination algorithm to $M$ gives a unique generator matrix $E(M)$ in \emph{reduced row echelon form}. We will also directly write $E(U)$ 
for $E(M)$ where $M$ is an arbitrary generator matrix for $U$. By $v(M)\in\F_2^n$ or $v(U)\in\F_2^n$ we denote the characteristic vector of 
the pivot columns in $E(M)$ or $E(U)$, respectively. These vectors are also called \emph{identifying} or \emph{pivot vectors}. In the following 
we will mostly use the notations $E(U)$ and $v(U)$ for $k$-dimensional subspaces of $\F_q^n$. The \emph{Ferrers tableaux} $T(U)$ of $U$ arises 
from $E(U)$ by removing the zeroes from each row of $E(U)$ left to the pivots and afterwards removing all pivot columns. If we then replace all 
remaining entries by dots we obtain the \emph{Ferrers diagram} $\cF(U)$ of $U$ which only depends on the identifying vector $v(U)$, so that we 
also write $\cF(v(U))$. As an example we consider
$$
  U=\left\langle\begin{pmatrix}
  1&0&1&1&1&0&1&0&1\\
  1&0&0&1&1&1&1&1&1\\
  0&0&0&1&0&0&0&1&0\\ 
  0&0&0&0&0&1&1&0&1
  \end{pmatrix}\right\rangle\in\cG_2(9,4),
$$  
where we have 
$$
  E(U)=
  \begin{pmatrix}
  1&0&0&0&1&0&0&0&0\\
  0&0&1&0&0&0&1&1&1\\
  0&0&0&1&0&0&0&1&0\\
  0&0&0&0&0&1&1&0&1
  \end{pmatrix},
$$
$v(U)=1	0	1	1	0	1	0	0	0\in\F_2^{9}$,   
$$
  T(U)=
  \begin{pmatrix}
   0&1&0&0&0\\
    &0&1&1&1\\
    &0&0&1&0\\
    & &1&0&1
  \end{pmatrix},
$$
and
$$
  \cF(U)=
  \begin{array}{lllll}
    \bullet & \bullet & \bullet & \bullet & \bullet \\
            & \bullet & \bullet & \bullet & \bullet \\
            & \bullet & \bullet & \bullet & \bullet \\
            &         & \bullet & \bullet & \bullet 
  \end{array}.
$$
The partially filled matrix $T(U)$ contains all essential information to describe the codeword $U$, where each entry is arbitrary in $\F_q$ and every different choice gives 
a different $k$-dimensional subspace in $\F_q^n$. The pivot vector $v(U)$ and the Ferrers diagram $\cF(U)$ of $U$ both partition $\cG_q(n,k)$ into specific classes. Note that 
this classification is not preserved by the isometries of $\F_q^n$ with respect to $\ds$. However the description with pivot vectors will be rather useful for constructions as 
we will see later on. If $n$ is given, $v(U)$ and $\cF(U)$ can be converted into each other.\footnote{The only issue occurs for pivot vectors $v(U)$ starting with a sequence of zeroes 
corresponding to the same number of leading empty columns in the Ferrers diagram. The latter, or their number, may not be directly visible.} So, we also write $v(\cF)$ for a given 
Ferrers diagram.   

\subsection{Skeleton codes, the Hamming metric, and the Echelon--Ferrers construction}
\label{subsec_skeleton_codes}
The \emph{Hamming distance} $$\dham(u,w)=\#\left\{1\le i \le n\,:\, u_i\neq w_i\right\},$$ for 
$u,w\in\F_q^n$, can be used to lower bound the subspace distance between two codewords 
$U,W\in\cG_q(n,k)$:
\begin{Lemma}(\cite[Lemma 2]{etzion2009error})\\
  \label{lemma_dist_subspace_hamming}
  For $U,W\in\cG_q(n,k)$ we have $\ds(U,W)\ge \dham(v(U),v(W))$.
\end{Lemma}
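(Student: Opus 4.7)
The plan is to translate both sides of the inequality into statements about the pivot sets $P_U:=\operatorname{supp}(v(U))$ and $P_W:=\operatorname{supp}(v(W))$, and then reduce the claim to a purely combinatorial bound on $\dim(U\cap W)$. Since $v(U),v(W)\in\F_2^n$ are characteristic vectors of $k$-element subsets of $\{1,\dots,n\}$, one has
\[
  \dham(v(U),v(W)) \;=\; |P_U\triangle P_W| \;=\; 2k-2|P_U\cap P_W|,
\]
while by definition $\ds(U,W)=2k-2\dim(U\cap W)$. Thus the lemma is equivalent to the single inequality
\[
  \dim(U\cap W) \;\le\; |P_U\cap P_W|.
\]

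The main step, and the place where one actually uses the definition of the pivot vector via reduced row echelon form, is the following observation: for any subspace $Z\in\cG_q(n,k)$ and any nonzero $z\in Z$, the index of the first nonzero coordinate of $z$ lies in $P_Z$. I would prove this by writing $z=\sum_{j=1}^k c_j r_j$, where $r_1,\dots,r_k$ are the rows of $E(Z)$ with pivots $p_1<\cdots<p_k$, and letting $j_0$ be the smallest index with $c_{j_0}\ne 0$. Because each $r_j$ has its leading nonzero entry at column $p_j$, the coordinates of $z$ in columns $i<p_{j_0}$ vanish, while in column $p_{j_0}$ we get $z_{p_{j_0}}=c_{j_0}\ne 0$ (the contributions from rows $r_j$ with $j>j_0$ are zero in column $p_{j_0}$ by the reduced form property). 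Hence the leading position of $z$ is $p_{j_0}\in P_Z$.

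With this observation in hand, the conclusion is immediate. Apply it simultaneously to $U$ and to $W$: every nonzero $x\in U\cap W$ has leading coordinate in $P_U\cap P_W$. Now consider the reduced row echelon form of $U\cap W$ itself, and let $q_1<\cdots<q_m$ be its pivots, with $m=\dim(U\cap W)$. Each row $s_l$ of that matrix is a nonzero element of $U\cap W$ whose leading coordinate is $q_l$, so $q_l\in P_U\cap P_W$ for all $l$. Therefore $\{q_1,\dots,q_m\}\subseteq P_U\cap P_W$, giving $\dim(U\cap W)\le |P_U\cap P_W|$, which is exactly the desired inequality. The only real obstacle is the leading-coordinate observation; everything else is bookkeeping with the identity $|P_U\triangle P_W|=2k-2|P_U\cap P_W|$.
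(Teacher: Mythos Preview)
Your proof is correct. The reduction of the inequality to $\dim(U\cap W)\le |P_U\cap P_W|$ via the identities $\ds(U,W)=2k-2\dim(U\cap W)$ and $\dham(v(U),v(W))=2k-2|P_U\cap P_W|$ is clean, and the key observation---that the leading coordinate of any nonzero vector of a subspace $Z$ is always a pivot of $E(Z)$---is argued properly using the reduced row echelon form.

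There is nothing to compare against in this paper: the lemma is stated with a citation to \cite{etzion2009error} and no proof is given here. Your argument is the standard one and would match what the cited source does.
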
  
The \emph{Hamming weight} $\wt(v)$ of a vector $v\in\F_q^n$ is its Hamming distance to the zero vector $\dham(v,\mathbf{0})$ or, in other words, the number of non-zero entries.   
If $\cS$ is a subset of $\F_2^n$ of cardinality at least $2$, then we define $\dham(\cS):=\min\{ \dham(v,v')\,:\, v,v'\in\cS, v\neq v'\}$. If $\#\cS<2$, then we formally set 
$\dham(\cS):=\infty$. We call $\dham(\cS)$ the \emph{minimum Hamming distance} of $\cS$. In applications for constant dimension codes we will assume that the elements of $\cS$ 
all have the same Hamming weight $k$. The vectors in $\F_2^n$ with Hamming weight $k$ are in one-to-one correspondence with the $k$-element subsets of an $n$-element set. So,    
slightly abusing notation, we define $\cG_1(n,k):=\left\{v\in\F_2^n\,:\,\wt(v)=k\right\}$. An $(n,M,d,k)_q$ {\cdc} $\cC$ such that all codewords have the same pivot vector $v$ 
is called $(n,M,d,k,v)_q$ {\cdc}. Directly from Lemma~\ref{lemma_dist_subspace_hamming} we can conclude:
\begin{Theorem}(\cite[Theorem 3]{etzion2009error})\\ \label{thm_EF}
  Let $\cS\subseteq\cG_1(n,k)$ with $\dham(\cS)\ge d$. If $\cC_v\subseteq\cG_q(n,k)$ is an 
  $(n,\star,d,k,v)_q$ {\cdc} for each $v\in \cS$, then $\cC=\cup_{v\in\cS} \cC_v$ is an $(n,\star,d,k)_q$ {\cdc} with cardinality $\sum_{v\in \cS} \#\cC_v$.
\end{Theorem}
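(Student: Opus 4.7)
The plan is to split the distance analysis into two cases based on whether two distinct codewords of $\cC$ come from the same constituent $\cC_v$ or from different ones, and then argue disjointness of the union to obtain the cardinality.

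First I would fix two distinct codewords $U,W\in\cC$ and write $U\in\cC_v$, $W\in\cC_{v'}$ for some $v,v'\in\cS$. If $v=v'$, then $U,W\in\cC_v$ and the hypothesis that $\cC_v$ is an $(n,\star,d,k,v)_q$ constant dimension code immediately yields $\ds(U,W)\ge d$. If $v\neq v'$, then since $U$ has pivot vector $v$ and $W$ has pivot vector $v'$, I would invoke Lemma~\ref{lemma_dist_subspace_hamming} to obtain
\[
  \ds(U,W)\ge \dham(v(U),v(W))=\dham(v,v')\ge \dham(\cS)\ge d,
\]
where the last inequality is the assumption on $\cS$. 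Hence in either case $\ds(U,W)\ge d$, so $\cC$ is indeed an $(n,\star,d,k)_q$ constant dimension code.

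For the cardinality, I would observe that any $U\in\cG_q(n,k)$ has a uniquely determined pivot vector $v(U)$, so the sets $\cC_v$ for distinct $v\in\cS$ are pairwise disjoint subsets of $\cG_q(n,k)$. Therefore $\#\cC=\#\bigcup_{v\in\cS}\cC_v=\sum_{v\in\cS}\#\cC_v$, as claimed.

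There is essentially no obstacle here: the statement is a direct packaging of Lemma~\ref{lemma_dist_subspace_hamming} together with the uniqueness of the reduced row echelon form (which guarantees that the pivot vector is a well-defined invariant of a subspace). The only thing worth being careful about is the degenerate cases $\#\cS\le 1$ or $\#\cC_v\le 1$, but the convention $\dham(\cS):=\infty$ when $\#\cS<2$ and the analogous convention for the minimum subspace distance make these cases vacuously fine.
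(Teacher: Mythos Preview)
Your proof is correct and matches the paper's approach: the paper simply remarks that the theorem follows ``directly from Lemma~\ref{lemma_dist_subspace_hamming}'' without spelling out the details, and your two-case argument together with the disjointness observation is exactly the intended unpacking of that remark.
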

Suitable choices for the $\cC_v$ are also discussed in \cite{etzion2009error} and we will do so in Subsection~\ref{subsec_rank_metric}. The underlying construction is called 
\emph{multilevel construction} in \cite{etzion2009error} and \emph{Echelon-Ferrers construction} in some other papers. Actually, the set $\cS$ is a binary code with minimum 
Hamming distance $d$ and sometimes called \emph{skeleton code}. By $A_q(n,d;k;v)$ we denote the maximum possible cardinality $M$ of an $(n,M,d,k,v)_q$ {\cdc}, so that 
Theorem~\ref{thm_EF} gives the lower bound  
\begin{equation}
  A_q(n,d;k)\ge \sum_{v\in \cS} A_q(n,d;k;v),
\end{equation}
where $\dham(\cS)\ge d$.

We can slightly generalize our notion to sets $\cV$ of binary vectors in $\F_2^n$ with Hamming weight $k$ each. If all pivot vectors of the codewords of an 
$(n,M,d,k)_q$ {\cdc} $\cC$ are contained in $\cV$, then we speak of an $(n,M,d,k,\cV)_q$ {\cdc} and denote the corresponding maximal possible cardinality by $A_q(n,d;k,\cV)$. 
For two subsets $\cV,\cV'$ of $\F_2^n$ we define their \emph{minimum Hamming distance} as $\dham(\cV,\cV'):=\min\{ \dham(v,v'),:\, v\in\cV,v'\in\cV'\}$. With this, we can directly 
generalize Theorem~\ref{thm_EF} to:
\begin{Theorem}
  \label{thm_generalized_skeleton_code}
  Let $\cV_1,\dots,\cV_s$ be subsets of $\cG_1(n,k)$ with $\dham(\cV_i,\cV_j)\ge d$ for all $1\le i<j\le s$. If $\cC_{\cV_i}\subseteq\cG_q(n,k)$ is an 
  $(n,\star,d,k,\cV_i)_q$ {\cdc} for each $1\le i\le s$, then $\cC=\cup_{1\le i\le s} \cC_{\cV_i}$ is an $(n,\star,d,k)_q$ {\cdc} with cardinality $\sum_{1\le i\le s} \#\cC_{\cV_i}$. 
\end{Theorem}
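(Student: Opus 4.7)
The plan is to follow the same two-step pattern as in the proof of Theorem~\ref{thm_EF}, namely bound the subspace distance between pairs of codewords and count the resulting set. First I would show that the subcodes are pairwise disjoint, so that the cardinality claim is automatic. For $i\neq j$ the hypothesis $\dham(\cV_i,\cV_j)\ge d\ge 1$ forces $\cV_i\cap\cV_j=\emptyset$, and since every $U\in\cC_{\cV_i}$ has $v(U)\in\cV_i$ while every $W\in\cC_{\cV_j}$ has $v(W)\in\cV_j$, the subcodes can share no codeword. Thus $\#\cC=\sum_{i=1}^s\#\cC_{\cV_i}$.

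Next I would verify the minimum distance by case distinction on two distinct codewords $U,W\in\cC$. If both lie in the same block $\cC_{\cV_i}$, then $\ds(U,W)\ge d$ holds by the assumption that $\cC_{\cV_i}$ is an $(n,\star,d,k,\cV_i)_q$ {\cdc}. If instead $U\in\cC_{\cV_i}$ and $W\in\cC_{\cV_j}$ with $i\neq j$, then Lemma~\ref{lemma_dist_subspace_hamming} yields
\[
  \ds(U,W)\ge \dham(v(U),v(W))\ge \dham(\cV_i,\cV_j)\ge d,
\]
where the middle inequality uses $v(U)\in\cV_i$, $v(W)\in\cV_j$ and the definition of $\dham(\cV_i,\cV_j)$. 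Since every codeword of $\cC$ is a $k$-dimensional subspace of $\F_q^n$, and its pivot vector lies in $\cV_1\cup\dots\cup\cV_s\subseteq\cG_1(n,k)$, the union $\cC$ is indeed an $(n,\star,d,k)_q$ {\cdc} of the claimed size.

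I do not expect any serious obstacle here: the argument is a direct generalization of Theorem~\ref{thm_EF}, where one now partitions the pivot vectors into the groups $\cV_1,\dots,\cV_s$ rather than requiring a single common pivot vector. The only nontrivial input is Lemma~\ref{lemma_dist_subspace_hamming}, and the assumption $\dham(\cV_i,\cV_j)\ge d$ has been defined precisely so that it replaces the role formerly played by $\dham(\cS)\ge d$ in the original skeleton-code statement. The remainder is bookkeeping.
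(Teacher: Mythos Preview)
Your proposal is correct and matches the paper's approach: the paper states the theorem as an immediate generalization of Theorem~\ref{thm_EF} via Lemma~\ref{lemma_dist_subspace_hamming}, without giving a separate proof, and your argument spells out exactly this reasoning. There is nothing to add or correct.
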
 
We call $\cS=\left\{\cV_1,\dots,\cV_s\right\}$ a \emph{generalized skeleton code} and call $$\dham(\cS):=\min\left\{\dham(\cV_i,\cV_j)\,:\, 1\le i<j\le s\right\}$$ the \emph{minimum (Hamming) 
distance} of $\cS$. With this, we have the lower bound
\begin{equation}
  \label{ie_EF_gen}
  A_q(n,d;k)\ge \sum_{\cV\in \cS} A_q(n,d;k;\cV),
\end{equation}
where $\dham(\cS)\ge d$.

In several constructions in the literature, Inequality~(\ref{ie_EF_gen}) is, indirectly, applied. To this end we introduce more notation 
to describe specially structured subsets of $\cG_1(n,k)$, i.e., by
$$
  {n_1 \choose k_1},\dots, {n_l \choose k_l} 
$$
we denote the set of binary vectors which contain exactly $k_i$ ones in positions $1+\sum_{j=1}^{i-1} n_j$ to $\sum_{j=1}^{i} n_j$ for all $1\le i\le l$. 
The cases of at least $k_i$ ones are denoted by ${n_i\choose {\ge k_i}}$ and the cases of at most $k_i$ ones are denoted by ${n_i\choose {\le k_i}}$. Also in this 
generalized setting we assume that the described set is a subset of $\cG_1(n,k)$, where $n=\sum_{i=1}^l n_i$ and $k=\sum_{i=1}^l k_i$, e.g.\ 
$$
  {n_1 \choose {\le k_1}},{{n-n_1} \choose {\ge k-k_1}} \subseteq \cG_1(n,k).
$$       
In our notation, the \emph{linkage construction} from \cite[Theorem 2.3]{gluesing10construction}, \cite[Corollary 39]{silberstein2015error} can be written as
\begin{equation}
  \label{ie_linkage}
  A_q(n,d;k)\ge A_q\!\left(n,d;k;{{n-\Delta}\choose k},{\Delta\choose 0}\right)+A_q\!\left(n,d;k;{{n-\Delta}\choose 0},{\Delta\choose k}\right),
\end{equation}
which was improved to
\begin{eqnarray}
  A_q(n,d;k)&\ge& A_q\!\left(n,d;k;{{n-\Delta}\choose k},{\Delta\choose 0}\right)\notag\\ 
  &&+A_q\!\left(n,d;k;{{n-\Delta-k+d/2}\choose 0},{\Delta+k+d/2\choose k}\right)\label{ie_linkage_gen}
\end{eqnarray}
in \cite[Theorem 18, Corollary 4]{heinlein2017asymptotic}, where $0\le \Delta\le n$ is a free parameter. With respect to Inequality~(\ref{ie_linkage}) we remark 
$A_q\!\left(n,d;k;{{n-\Delta}\choose 0},{\Delta\choose k}\right)=A_q(\Delta,d;k)$ and that one key observation in \cite{gluesing10construction} was
\begin{equation}
  \label{ie_lifted}
  A_q\!\left(n,d;k;{{n-\Delta}\choose k},{\Delta\choose 0}\right)\ge q^{\Delta(k-d/2+1)}A_q(n-\Delta,d;k),
\end{equation}  
so that the two summands can be expressed in terms of $A_q(n',d;k)$ values. We will deduce Inequality~(\ref{ie_lifted}) in Subsection~\ref{subsec_rank_metric}. 
Clearly, the Hamming distance between ${{n-\Delta}\choose k},{\Delta\choose 0}$ and ${{n-\Delta}\choose 0},{\Delta\choose k}$ is $2k$, so that 
Inequality~(\ref{ie_linkage}) is a direct implication of Theorem~\ref{thm_generalized_skeleton_code} since the minimum subspace distance between two 
$k$-dimensional subspaces is at most $2k$, assuming $2k\le n$. Observing that the minimum Hamming distance between ${{n-\Delta}\choose k},{\Delta\choose 0}$ and     
${{n-\Delta-k+d/2}\choose 0},{\Delta+k+d/2\choose k}$ is at least $d$ yields Inequality~(\ref{ie_linkage_gen}).

From the computational point of view Theorem~\ref{thm_generalized_skeleton_code} translates to a weighted maximum clique problem, where the vertices are 
the candidates for $\cV\subseteq\cG_1(n,k)$ and two vertices $\cV,\cV'$ are joined by an edge iff $\dham(\cV,\cV')\ge d$. For constructive lower bounds for 
$A_q(n,d;k)$ we choose any constructive lower bound $\underline{A}_q(n,d;k;\cV)\le A_q(n,d;k;\cV)$ as vertex weights. Known upper bounds $\overline{A}_q(n,d;k;\cV)\ge 
A_q(n,d;k;\cV)$ can also be used as vertex weights. However, then the exact solution of the weighted maximum clique problem does not give an upper bound 
for $A_q(n,d;k)$ but only an upper bound on the code sizes that can be obtained by Theorem~\ref{thm_generalized_skeleton_code} using a specific generalized skeleton 
code $\cS$. Note that in principle we can choose all non-empty subsets of $\cG_1(n,k)$ as vertices. However, this set is really huge, so that 
one usually considers only suitably selected subsets thereof. For the case of $1$-element subsets of $\cG_1(n,k)$, i.e., the Echelon--Ferrers construction, 
cf.~Theorem~\ref{thm_EF}, exhaustive searches where performed in \cite{feng2020bounds}. There also upper bounds for the code sizes that can be achieved 
by the Echelon--Ferrers construction, based on Theorem~\ref{thm_upper_bound_ef} as vertex weights, were computed. While lower and upper bounds for the 
Echelon--Ferrers construction can be computed as a polynomial in the 
field size $q$, see \cite{feng2020bounds} for the details, the parametric determination of the {\lq\lq}optimal{\rq\rq} (generalized) skeleton code is a 
hard problem. So far it is only solved for the case of so-called partial spreads corresponding to $A_q(n,2k;k)$, where $n\ge 2k$, see \cite[Theorem 5.2]{feng2020bounds}. 
In our subsequent results on lower bounds for $A_q(n,d;k)$ we will always state the underlying skeleton codes. Note that the corresponding distance analysis 
in the Hamming metric, cf.~Inequality~(\ref{ie_linkage_gen}), can be parametric. To sum up, Theorem~\ref{thm_generalized_skeleton_code} is just a general 
framework for constructions and the selection of good generalized skeleton codes is a non-trivial problem. The decomposition of a given $\cdc$ $\cC$ into 
subcodes $\cC_\cV$ such that $\cC$ is given by Theorem~\ref{thm_generalized_skeleton_code} is also non-trivial, if the generalized skeleton code $\cS$ has size 
at least two, but useful indeed.          

\subsection{Vertex weights, rank-metric codes, and corresponding constructions}
\label{subsec_rank_metric}
If the pivot vectors of two codewords coincide, then we can utilize the \emph{rank distance} 
$\dr(A,B):=\operatorname{rank}(A-B)$ for matrices $A,B\in\F_q^{m\times l}$ to express the corresponding subspace distance.
\begin{Lemma}(\cite[Corollary 3]{silberstein2011large})\\[-4mm]
  \label{lemma_dist_subspace_rank}

  \noindent 
  For $U,W\in\cG_q(n,k)$ with $v(U)=v(W)$ we have $\ds(U,W)=2\dr(E(U),E(W))$.
\end{Lemma}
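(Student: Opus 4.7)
The plan is to reduce the computation of $\dim(U\cap W)$ to a rank computation on the difference $E(U)-E(W)$, exploiting that both reduced row echelon matrices have their pivot columns in exactly the same positions (since $v(U)=v(W)$).

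First I would set $M=E(U)-E(W)\in\F_q^{k\times n}$ and note that in each of the $k$ pivot columns indexed by $v(U)=v(W)$, both $E(U)$ and $E(W)$ carry the same standard unit vector, so $M$ has a zero column there. In particular the rows of $E(U)$, the rows of $E(W)$, and the rows of $M$ all lie in the column-sum behaviour dictated by the pivot pattern, and the pivot columns act as a ``coordinate readout.''

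Next I would parametrize $U\cap W$. Every $v\in U$ can be written uniquely as $v=xE(U)$ with $x\in\F_q^k$, because the pivot columns of $E(U)$ carry the identity and hence the entries of $x$ are recovered from the entries of $v$ in the pivot positions. Similarly, $v\in W$ iff $v=yE(W)$ for a unique $y\in\F_q^k$, and reading off the pivot columns of $v$ now forces $y=x$. Consequently,
\begin{equation*}
v\in U\cap W\ \Longleftrightarrow\ v=xE(U)\ \text{with}\ xE(U)=xE(W)\ \Longleftrightarrow\ xM=\mathbf{0}.
\end{equation*}
Hence $U\cap W$ is isomorphic to the left kernel of $M$, giving
\begin{equation*}
\dim(U\cap W)=k-\rk(M)=k-\dr(E(U),E(W)).
\end{equation*}

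Finally I would substitute into the definition of the subspace distance:
\begin{equation*}
\ds(U,W)=2k-2\dim(U\cap W)=2k-2\bigl(k-\dr(E(U),E(W))\bigr)=2\,\dr(E(U),E(W)).
\end{equation*}
The only subtle step is the uniqueness argument forcing $y=x$ from the coincidence of the pivot patterns; everything else is linear-algebraic bookkeeping. I do not anticipate a serious obstacle beyond being careful with this pivot readout.
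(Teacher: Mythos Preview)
Your proof is correct. The key observation---that matching pivot vectors force the pivot columns of $E(U)$ and $E(W)$ to coincide, so that any vector in $U\cap W$ must arise from the \emph{same} coefficient vector $x$ with respect to both bases---is exactly the right idea, and the rank--nullity computation then finishes the argument cleanly.

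There is nothing to compare against in the paper itself: the lemma is quoted without proof from \cite[Corollary~3]{silberstein2011large}. Your argument supplies a direct, self-contained verification.
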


Since $\dr$ is a metric, we call a subset $C\subseteq\F_q^{m\times l}$ of matrices 
a \emph{rank-metric code}. If $C$ is a linear subspace of $\F_q^{m\times l}$ we call the code 
\emph{linear}. Given a Ferrers diagram $\cF$ with $m$ dots in the rightmost column and $l$ dots 
in the top row, we call a rank-metric code $C_{\cF}$ a \emph{Ferrers diagram rank-metric} (\FDRM) 
code if for any codeword $M\in \F_q^{m\times l}$ of $C_{\cF}$ all entries not in $\cF$ are zero. 
By $\dr(C_{\cF})$ we denote the minimum rank distance, i.e., the minimum of the rank distance between 
pairs of different codewords. 

\begin{Definition}\label{definition_lifted}
  (\cite{silberstein2015error})\\
  Let $\cF$ be a Ferrers diagram and $C_{\cF}\subseteq \F_q^{k\times(n-k)}$ be an {\FDRM} code. The 
  corresponding \emph{lifted {\FDRM} code} $\cC_{\cF}$ is given by
  $$
    \cC_{\cF}=\left\{U\in\cG_q(n,k)\,:\, \cF(U)=\cF, T(U)\in C_{\cF}\right\}.
  $$ 
\end{Definition}

Directly from Lemma~\ref{lemma_dist_subspace_rank} and Definition~\ref{definition_lifted} we can conclude:
\begin{Lemma}(\cite[Lemma 4]{etzion2009error})\\\label{lemma_FDRM_CDC_equivalence} 
  Let $C_{\cF}\subseteq \F_q^{k\times(n-k)}$ be an {\FDRM} code with minimum rank distance $\delta$, then the lifted 
  {\FDRM} code $\cC_{\cF}\subseteq \cG_q(n,k)$ is an $(n,\#C_{\cF},2\delta,k)_q$ {\cdc}.
\end{Lemma}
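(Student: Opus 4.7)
The plan is to unpack the definitions and reduce everything to Lemma~\ref{lemma_dist_subspace_rank} applied to suitably chosen representatives. First, I would observe that by Definition~\ref{definition_lifted} every codeword $U\in\cC_{\cF}$ has Ferrers diagram exactly $\cF$, hence the same pivot vector $v=p(\cF)$. Since the reduced row echelon form $E(U)$ is determined by $v$ together with the filled-in entries of the Ferrers tableaux $T(U)$, the assignment $U\mapsto T(U)$ defines a bijection between $\cC_{\cF}$ and $C_{\cF}$, yielding the cardinality claim $\#\cC_{\cF}=\#C_{\cF}$. Membership in $\cG_q(n,k)$ is immediate from the definition.

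For the minimum distance, I would take two distinct codewords $U,W\in\cC_{\cF}$. Because $v(U)=v(W)=v$, Lemma~\ref{lemma_dist_subspace_rank} gives
\[
\ds(U,W)=2\dr(E(U),E(W))=2\rk(E(U)-E(W)).
\]
Now the forward work is to match this rank with $\dr(T(U),T(W))$. The matrix $E(U)-E(W)\in\F_q^{k\times n}$ is zero in each pivot column (as the two reduced row echelon forms share the identity pattern there) and is also zero in every position to the left of a pivot in its row (as both $E(U)$ and $E(W)$ vanish in those positions). Deleting the pivot columns and the forced-zero positions therefore leaves a matrix whose non-trivial entries are exactly those of $T(U)-T(W)$, embedded into the Ferrers shape $\cF$ inside a $k\times(n-k)$ array with zero fill outside. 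Since deleting identically-zero columns and rearranging entries does not change the rank, this gives $\rk(E(U)-E(W))=\rk(T(U)-T(W))=\dr(T(U),T(W))$.

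Combining these observations, the assumption $\dr(C_{\cF})\ge \delta$ yields $\dr(T(U),T(W))\ge\delta$ for every pair of distinct codewords in $\cC_{\cF}$, so $\ds(U,W)\ge 2\delta$ as required. The main technical point, and the only step that is not purely bookkeeping, is the identification $\rk(E(U)-E(W))=\rk(T(U)-T(W))$; I would write that step explicitly, paying attention to the fact that the deleted columns are honestly zero (not just zero on the Ferrers support) so that the rank is preserved rather than only upper-bounded.
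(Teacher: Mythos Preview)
Your proof is correct and follows exactly the route the paper indicates: the paper simply states that the lemma follows ``directly from Lemma~\ref{lemma_dist_subspace_rank} and Definition~\ref{definition_lifted}'' without spelling out details, and what you have written is precisely the unpacking of that remark. Your explicit verification that deleting the pivot columns of $E(U)-E(W)$ yields $T(U)-T(W)$ with the same rank is the only nontrivial bookkeeping step, and you handle it correctly.
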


Lifted {\FDRM} codes $\cC_{\cF}$ are exactly the subcodes $\cC_v$ needed in the Echelon-Ferrers construction in Theorem~\ref{thm_EF}.
In \cite[Theorem 1]{etzion2009error} a general upper bound for (linear) {\FDRM} codes was given. Since the bound is also 
true for non-linear {\FDRM} codes, as observed by several authors, denoting the pivot vector corresponding to 
a given Ferrers diagram $\cF$ by $v(\cF)$ and using Lemma~\ref{lemma_FDRM_CDC_equivalence}, we can rewrite the upper bound to:
\begin{Theorem}
  \label{thm_upper_bound_ef}
  $$
    A_q(n,d;k;v(\cF))\le q^{\min\!\left\{\nu_i\,:\,0\le i\le d/2-1\right\}},
  $$
  where $\nu_i$ is the number of dots in $\cF$, which are neither contained in the first $i$ rows nor contained in the 
  last $\tfrac{d}{2}-1-i$ columns. 
\end{Theorem}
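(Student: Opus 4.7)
The plan is to reduce the problem to bounding the cardinality of a Ferrers diagram rank-metric (\FDRM) code, then apply a simple row/column rank argument.

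First I would observe that an $(n,M,d,k,v(\cF))_q$ \cdc{} $\cC$ is essentially a lifted \FDRM{} code. Indeed, every codeword $U\in\cC$ has the same pivot vector $v(\cF)$, so $U$ is uniquely determined by $T(U)\in\F_q^{k\times(n-k)}$, viewed as a matrix supported only on the dots of $\cF$ (the entries in pivot columns and to the left of each pivot are forced by the reduced row echelon form). By Lemma~\ref{lemma_dist_subspace_rank}, for two codewords $U,W\in\cC$ we have
\[
  \ds(U,W)\;=\;2\dr(E(U),E(W))\;=\;2\dr(T(U),T(W)),
\]
where in the last step I use that $E(U)-E(W)$ is zero outside the positions of $\cF$. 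Hence $C_{\cF}:=\{T(U)\,:\,U\in\cC\}\subseteq\F_q^{k\times(n-k)}$ is an \FDRM{} code of cardinality $M$ and minimum rank distance $\dr(C_{\cF})\ge d/2$, and it suffices to prove $\#C_{\cF}\le q^{\nu_i}$ for every $0\le i\le d/2-1$.

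Fix such an $i$ and set $j:=d/2-1-i$. Consider the projection $\pi_i\colon C_{\cF}\to\F_q^{\nu_i}$ that records the entries of a matrix $M\in C_{\cF}$ at those dots of $\cF$ which lie neither in the top $i$ rows nor in the rightmost $j$ columns. The key step is to show $\pi_i$ is injective: if $\pi_i(M)=\pi_i(M')$, then $N:=M-M'$ has all its nonzero entries confined to the union of the first $i$ rows and the last $j$ columns. Any matrix with this support property has rank at most $i+j=d/2-1$ (write $N=R+C$ where $R$ is supported in the $i$ rows and $C$ in the $j$ columns; then $\rk(N)\le\rk(R)+\rk(C)\le i+j$). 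Since $\dr(C_{\cF})\ge d/2$, this forces $N=0$, i.e., $M=M'$. Therefore $\#C_{\cF}=\#\pi_i(C_{\cF})\le q^{\nu_i}$.

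Taking the minimum over $i\in\{0,1,\dots,d/2-1\}$ yields the claimed bound. The only conceptual step worth noting is the rank inequality $\rk(R+C)\le i+j$, which is the standard decomposition trick behind the Etzion--Silberstein bound; there is no real obstacle since the reduction to \FDRM{} codes was already set up by Lemmas~\ref{lemma_dist_subspace_rank} and \ref{lemma_FDRM_CDC_equivalence}, and the observation that the same upper bound applies in the non-linear case is immediate from the injectivity of $\pi_i$ (linearity of $C_{\cF}$ is never used).
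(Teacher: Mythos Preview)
Your proof is correct and is exactly the standard Etzion--Silberstein argument that the paper is invoking: the paper does not give a self-contained proof of Theorem~\ref{thm_upper_bound_ef} but reduces it to \cite[Theorem~1]{etzion2009error} via Lemma~\ref{lemma_FDRM_CDC_equivalence} and remarks that the bound also holds for non-linear \FDRM{} codes. Your reduction to an \FDRM{} code via Lemma~\ref{lemma_dist_subspace_rank} together with the injectivity of the projection $\pi_i$ (using the rank estimate $\rk(R+C)\le i+j$) is precisely that argument, and your observation that linearity of $C_{\cF}$ is never used is the non-linear extension the paper alludes to.
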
  
If we choose a minimum subspace distance of $d=6$, then we obtain $$A_2(9,6;4;101101000)\le 2^7$$ due to
$$
  \begin{array}{lllll}
    \bb & \bb & \bb & \rb & \rb \\
        & \bb & \bb & \rb & \rb \\
        & \bb & \bb & \rb & \rb \\
        &     & \bb & \rb & \rb 
  \end{array}
  \quad  
  \begin{array}{llllll}
    \rb & \rb & \rb & \rb & \rb \\
        & \bb & \bb & \bb & \rb \\
        & \bb & \bb & \bb & \rb \\
        &     & \bb & \bb & \rb 
  \end{array}
  \quad  
  \begin{array}{llllll}
    \rb & \rb & \rb & \rb & \rb \\
        & \rb & \rb & \rb & \rb \\
        & \bb & \bb & \bb & \bb \\
        &     & \bb & \bb & \bb 
  \end{array}
  .
$$
where the blue dots are those that are neither contained in the first $i$ rows nor contained in the last $\tfrac{d}{2}-1-i$ columns for $1\le i\le 3$.

While it is conjectured that the upper bound from Theorem~\ref{thm_upper_bound_ef} (and the corresponding bound for {\FDRM} codes) can always be attained, this 
problem is currently solved for specific instances like e.g.\ rank-distances $\delta=2$ only. For more results see e.g.\ \cite{antrobus2019maximal,liu2019constructions} 
and the references mentioned therein. Another important solved case are rectangular Ferrers diagrams. If $2\le 2k\le n$ and $\cF$ is the rectangular Ferrers diagrams 
with $k$ dots in each column and $n-k$ dots in each row, then a rank-metric code $C_{\cF}\subseteq\F_q^{k\times(n-k)}$ attaining the maximum possible cardinality 
$q^{(n-k)(k-d/2+1)}$ for a given minimum subspace distance $d\le 2k$ is called \emph{maximum rank distance} (\MRD) code. More generally, the maximum size of an 
$(m\times n,d_r)_q$-rank metric code is given by $m(q,m,n,\drank) := q^{\max\{m,n\}\cdot(\min\{m,n\}-\drank+1)}$. A rank metric code ${\cM} \subseteq \F_q^{m\times n}$ attaining 
this bound is said to be an {\MRD} code with parameters $(m \times n, \drank)_q$ or {\em $(m \times n, \drank)_q$ {\MRD} code}, see e.g.\ the survey \cite{sheekey2019mrd}. 
Linear {\MRD} codes exist for all parameters. Moreover, for $\drank<\drank'$ we can assume the existence of a linear $(m \times n, \drank)_q$ {\MRD} code that contains an 
$(m \times n, \drank')_q$ {\MRD} code as a subcode. The rank distribution of an additive $(m \times n, \drank)_q$ {\MRD} code is completely determined by its parameters, i.e., 
the number of codewords of rank $r$ is given by 
\begin{equation}
  \label{eq_rank_distribution_additive_MRD_code}
  a(q,m,n,\drank,r):=\gaussmnum{\min\{n,m\}}{r}{q}\sum_{s=0}^{r-d_r} (-1)^sq^{{s\choose 2}}\cdot\gaussmnum{r}{s}{q}\cdot\left(q^{\max\{n,m\}\cdot(r-d_r-s+1)}-1\right)
\end{equation}
for all $d_r\le r\le \min\{n,m\}$, see e.g.\ \cite[Theorem 5.6]{delsarte1978bilinear} or \cite[Theorem 5]{sheekey2019mrd}, where 
\begin{equation}
  \gaussmnum{n}{k}{q}=\prod_{i=0}^{k-1} \frac{q^{n-i}-1}{q^{k-i}-1}
\end{equation}
is the \emph{Gaussian binomial coefficient} counting the number of $k$-dimensional subspaces in 
$\F_q^n$. Clearly, there is a unique codeword of rank strictly smaller than $\drank$ -- the zero matrix.

Since even linear {\MRD} codes exist for all parameters, lifting gives the well-known lower bound
\begin{equation}
  A_q(n,d;k)\ge q^{(n-k)(k-d/2+1)}
\end{equation}
(assuming $2k\le n$), which is at least half the optimal value for $d\ge 4$, see e.g.\ \cite[Proposition 8]{heinlein2017asymptotic} and Inequality~(\ref{ie_cdc_lb_ub}).
In general, a subset $M\subseteq\F_q^{k\times n}$ with minimum rank distance $\delta$ is called $(k\times n,\delta)_q$-rank metric code.

Instead of starting with an {\FDRM} code $C_{\cF}$ and lifting it to a {\cdc} $\cC_{\cF}$ one can also start from an $(m,N,d,k)_q$ {\cdc} 
$\cC$ and an {\MRD} code $\cM\subseteq \F_q^{k\times(n-m)}$ with minimum rank distance $d/2$. With this we can construct a {\cdc}
\begin{equation}
  \cC'=\left\{\langle E(U)|M\rangle\,:\,U\in\cC, M\in\cM\right\}\subseteq \cG_q(n,k)
\end{equation}
with $\ds(\cC')=d$ and $\#\cC'=\#\cC\cdot\#\cM$, where $A|B$ denotes the concatenation of two matrices $A$ and $B$ with the same number of rows. This 
lifting variant was called \emph{Construction D} in \cite[Theorem 37]{silberstein2015error}, cf.\ \cite[Theorem 5.1]{gluesing9cyclic}. By construction, 
the identifying vectors of the codewords of $\cC'$ contain their $k$ ones in the first $m$ positions. Thus, we end up with Inequality~(\ref{ie_lifted}).

Lower bounds for 
$$
  A_q\!\left(n,d;k;{{n'}\choose {k'}},{{n-n'}\choose {n-k'}}\right) 
$$
were obtained in \cite{heinlein2017coset}, where the underlying construction was named \emph{coset construction}. In \cite{xu2018new} the inequality
\begin{eqnarray}
  A_q(n,d;k)&\ge& A_q\!\left(n,d;k;{{n-\Delta}\choose k},{\Delta\choose 0}\right)\notag\\
  &&             +A_q\!\left(n,d;k;{{n-\Delta}\choose {\le k-d/2}},{\Delta\choose {\ge d/2}}\right),  
\end{eqnarray}  
which holds for all $0\le\Delta<n$ due to Theorem~\ref{thm_generalized_skeleton_code}, was used in the special case $\Delta=k$ to construct many {\cdc}s 
with larger sizes than previously known. In \cite{kurz2019note} the quantity $A_q\!\left(n,d;k;{{n-\Delta}\choose {\le k-d/2}},{\Delta\choose {\ge d/2}}\right)$    
 was introduced as $B_q(n,\Delta,d;k)$. A lower bound for $A_q\!\left(n,d;k;{{n-k}\choose {\le k-d/2}},{k \choose {\ge d/2}}\right)$ was constructed in \cite{xu2018new} via
 $$
   \left\{\langle M|I_k\rangle\,:\, M\in \cM,\operatorname{rank}(M)\le k-d/2\right\},
 $$
 where $I_k$ denotes the $k\times k$ unit matrix and $\mathcal{M}\subseteq \F_q^{k\times(n-k)}$ is a rank metric code with $\dr(\cM)\ge d/2$. Note that the generator matrices $(M|I_k)$ 
 are not in reduced row echelon form in general. By replacing $I_k$ by $E(U)$ for all codewords of a $(\Delta,\star,d,k)_q$ {\cdc} we obtain yet another variant of the lifting idea. One 
 of the most general versions can be found in \cite[Lemma 4.1]{cossidente2019combining}:
\begin{Lemma}
  \label{lemma_construction_1}
  For a subspace distance $d$, %%with %%$2\le d\le k$, 
  %%$2\le d\le 2k$, %% and $d\equiv 0\pmod 2$,    
  let $\bar{n}=\left(n_1,\dots,n_l\right)\in\mathbb{N}^l$, where $l \ge 2$, be such that $\sum_{i=1}^l n_i=n$ and $n_i\ge k$ for all $1\le i\le l$. 
  Let $\mathcal{C}_i$ be an $(n_i,\star,d,k)_q$ {\cdc} and $\mathcal{M}_i$ be a $(k\times n_i,\tfrac{d}{2})_q$-rank metric code for $1\le i\le l$. Then 
  $\mathcal{C}=\bigcup_{i=1}^l\mathcal{C}^i$, where
  \begin{eqnarray*}
    \mathcal{C}^i=\Big\{\left\langle M_1|\dots|M_{i-1}|E(U_i)|M_{i+1}|\dots|M_l\right\rangle &\!\!\!\!\!\!\!:\!\!\!\!\!\!\!&
    U_i\in\mathcal{C}_i, M_j\in\mathcal{M}_j, \,\forall 1\le j\le l, i\neq j,\\ 
    &&\text{ and } \operatorname{rk}(M_j)\le k-\tfrac{d}{2}, \,\forall 1\le j<i \Big\},
  \end{eqnarray*} 
  is an $(n,\star,d,k)_q$ {\cdc} of cardinality
  $$
    \#\mathcal{C}=\sum_{i=1}^l \left(\prod_{j=1}^{i-1} \#\left\{M\in\mathcal{M}_j\,:\, \rk(M)\le k-\tfrac{d}{2}\right\}\right)\cdot 
    \#\mathcal{C}_i\cdot\left(\prod_{j=i+1}^{l} \#\mathcal{M}_j\right).
  $$
\end{Lemma}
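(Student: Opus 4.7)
The plan is to verify the subspace distance by a three-case analysis on the dominant blocks of two codewords, then read off the cardinality from the distance. Let $X\in\mathcal{C}^i$ and $Y\in\mathcal{C}^{i'}$ have generator matrices $A=(A_1|\dots|A_l)$ and $B=(B_1|\dots|B_l)$ of the prescribed block form, and assume without loss of generality $i\le i'$. The workhorse throughout is the coordinate projection $\pi_i\colon \F_q^n\to\F_q^{n_i}$ onto the $i$-th block. Because $A_i=E(U_i)$ already has rank $k$, the restriction $\pi_i|_X$ is an isomorphism onto $U_i$; in particular $\pi_i|_{X\cap Y}$ is injective, so $\dim(X\cap Y)=\dim \pi_i(X\cap Y)\le \dim\!\left(\pi_i(X)\cap \pi_i(Y)\right)$.

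If $i<i'$, the defining rank restriction of $\mathcal{C}^{i'}$ forces $\rk(B_i)\le k-\tfrac{d}{2}$, hence $\dim \pi_i(Y)\le k-\tfrac{d}{2}$ and therefore $\dim(X\cap Y)\le k-\tfrac{d}{2}$, giving $\ds(X,Y)\ge d$. If $i=i'$ but $U_i\ne U'_i$, the same projection sends $X\cap Y$ into $U_i\cap U'_i$, whose dimension is at most $k-\tfrac{d}{2}$ because $\ds(\mathcal{C}_i)\ge d$, and again $\ds(X,Y)\ge d$. If $i=i'$ and $U_i=U'_i$, I would switch to $C:=A-B$: its $i$-th block vanishes, so the row space $\langle C\rangle$ lies in $\ker \pi_i$; combined with $X\cap \ker \pi_i=0$ this yields $\dim(X+Y)=k+\rk(C)$. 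Since $X\ne Y$, at least one block difference $M_j-M'_j$ is nonzero, and the minimum rank distance of $\mathcal{M}_j$ forces $\rk(C)\ge \rk(C_j)\ge \tfrac{d}{2}$, again producing $\ds(X,Y)\ge d$.

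The cardinality formula is then a bookkeeping corollary. Within a fixed $\mathcal{C}^i$, the distance bound ensures that distinct parameter tuples $(U_i,(M_j)_{j\ne i})$ give distinct subspaces, so $\#\mathcal{C}^i$ equals the product of $\#\mathcal{C}_i$, the factors $\#\{M\in\mathcal{M}_j\colon \rk(M)\le k-\tfrac{d}{2}\}$ for $j<i$, and the factors $\#\mathcal{M}_j$ for $j>i$; the sets $\mathcal{C}^i$ are pairwise disjoint because the $i<i'$ analysis applied to $X=Y$ would force $\dim X\le k-\tfrac{d}{2}<k$. The main obstacle is the $i<i'$ case: one has to recognise that the left-of-pivot rank restriction $\rk(M_j)\le k-\tfrac{d}{2}$ in the definition of $\mathcal{C}^{i'}$ is exactly what is needed to make the cross-distance between a late-pivot codeword and an early-pivot codeword large, and that dropping this asymmetric restriction would collapse the argument.
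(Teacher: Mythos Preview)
Your argument is correct. The paper does not supply its own proof of this lemma; it is quoted verbatim from \cite[Lemma~4.1]{cossidente2019combining}, so there is nothing in the present paper to compare against directly. That said, your three-case analysis via the block projection $\pi_i$ is precisely the standard route: the injectivity of $\pi_i|_X$ coming from $\rk(E(U_i))=k$ reduces $\dim(X\cap Y)$ to an estimate on $\pi_i(Y)$, which is handled by the rank cap $\rk(M_j)\le k-\tfrac{d}{2}$ when $i<i'$ and by $\ds(\mathcal{C}_i)\ge d$ when $i=i'$ with $U_i\ne U'_i$; the remaining case is dispatched by the row-space identity $X+Y=X\oplus\langle A-B\rangle$ and the rank-metric condition on $\mathcal{M}_j$. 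Your closing observation---that the asymmetric rank restriction on the blocks with $j<i$ is exactly what makes the cross-distance work---is the conceptual core of the construction and is well identified.

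One very small point of presentation: in Case~3 you write ``since $X\ne Y$, at least one block difference $M_j-M'_j$ is nonzero''. For the subsequent cardinality argument you actually want the converse direction (distinct tuples force $X\ne Y$), and indeed your computation already gives it: if some $M_j\ne M'_j$ then $\rk(C)\ge\tfrac{d}{2}>0$, whence $\dim(X+Y)>k$. It would be cleaner to phrase Case~3 under the hypothesis ``the parameter tuples differ'' rather than ``$X\ne Y$'', so that injectivity of the parametrisation and the distance bound are obtained simultaneously.
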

So, if we assume that the $\cM_j$ are additive {\MRD} codes, then using Equation~(\ref{eq_rank_distribution_additive_MRD_code}) directly gives:
\begin{Corollary}(\cite[Corollary 4.2]{cossidente2019combining})
  \label{cor_construction_1}
  Let $d$ be a subspace distance, $\bar{n}=\left(n_1,\dots,n_l\right)\in\mathbb{N}^l$, and $l\ge 2$, be such that $\sum_{i=1}^l n_i=n$ and $n_i\ge k$ for all $1\le i\le l$. 
  Then, we have $A_q(n,d;k)\ge$
  $$
    \sum_{i=1}^l \left(\prod_{j=1}^{i-1} \left(1+\sum_{r=\tfrac{d}{2}}^{k-\tfrac{d}{2}}a(q,k,n_j,\tfrac{d}{2},r)\right)\right)\cdot 
    A_q(n_i,d;k)\cdot\left(\prod_{j=i+1}^{l}m(q,k,n_j,\tfrac{d}{2})\right).
  $$
\end{Corollary}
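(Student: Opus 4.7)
My plan is to apply Lemma~\ref{lemma_construction_1} with optimal choices for the constituent codes and then translate the resulting cardinality formula into the claimed lower bound on $A_q(n,d;k)$.

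First, for each $i$ I would select an optimal $(n_i,A_q(n_i,d;k),d,k)_q$ {\cdc} $\mathcal{C}_i$, whose existence is immediate from the definition of $A_q(n_i,d;k)$. For each $j$, I would choose $\mathcal{M}_j$ to be an additive (in fact linear) $(k\times n_j,d/2)_q$ {\MRD} code, which exists for all admissible parameters as recalled in the paragraph preceding Equation~(\ref{eq_rank_distribution_additive_MRD_code}). These choices immediately give $\#\mathcal{M}_j=m(q,k,n_j,d/2)$ and $\#\mathcal{C}_i=A_q(n_i,d;k)$.

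The main step is to evaluate $\#\{M\in\mathcal{M}_j:\rk(M)\le k-d/2\}$. Since $\mathcal{M}_j$ is additive with minimum rank distance $d/2$, the only codeword of rank strictly less than $d/2$ is the zero matrix, contributing a term $1$. For each $r$ with $d/2\le r\le k-d/2$, Equation~(\ref{eq_rank_distribution_additive_MRD_code}) yields exactly $a(q,k,n_j,d/2,r)$ codewords of rank $r$; summing over these ranks gives
$$
\#\!\left\{M\in\mathcal{M}_j:\rk(M)\le k-\tfrac{d}{2}\right\}\;=\;1+\sum_{r=d/2}^{k-d/2}a(q,k,n_j,\tfrac{d}{2},r).
$$
The hypothesis $n_j\ge k$ ensures that $k-d/2\le k=\min\{n_j,k\}$, so the summation range lies within the region where formula~(\ref{eq_rank_distribution_additive_MRD_code}) applies; in the degenerate case $d>k$ the sum is empty, which is still consistent since the zero matrix is then the only matrix of rank $\le k-d/2<d/2$.

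Substituting these three identities into the cardinality expression of Lemma~\ref{lemma_construction_1} produces exactly the claimed lower bound. The proof is essentially a direct substitution, so I do not foresee a genuine obstacle; the only point requiring care is to isolate the zero codeword before invoking the rank distribution formula, since the latter is stated only for ranks at least $d/2$.
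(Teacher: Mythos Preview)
Your proposal is correct and follows exactly the route the paper indicates: the corollary is obtained from Lemma~\ref{lemma_construction_1} by taking each $\mathcal{C}_i$ optimal and each $\mathcal{M}_j$ an additive {\MRD} code, then invoking the rank distribution formula~(\ref{eq_rank_distribution_additive_MRD_code}) together with the observation that the zero matrix is the unique codeword of rank below $d/2$. The paper states this in a single sentence (``if we assume that the $\mathcal{M}_j$ are additive {\MRD} codes, then using Equation~(\ref{eq_rank_distribution_additive_MRD_code}) directly gives''), and your write-up simply spells out the same substitution.
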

Of course, Lemma~\ref{lemma_construction_1} can also be applied if $\cM_j$ is not additive or not an {\MRD} code. As an example we consider the 
$(3\times 4,3)_2$ {\MRD} codes classified in \cite{hkk_partial_plane_spreads}. Up to isomorphism there are $7$ linear and $30$ non-linear such codes. Considering a 
coset, i.e.\ adding an arbitrary matrix in $\F_2^{3\times 4}$ to all codewords, does not change the minimum rank distance but eventually the rank distribution. Here 
the occurring rank distributions are given by $0^1 3^{15}$, $2^7 3^9$, and $1^1 2^4 3^{11}$. %% check_mrd_2021.cpp   check_mrd_nl_2021.cpp
Rank-metric codes of constant rank with a lower bound on the minimum rank-distance have been studied in \cite{gadouleau2010constant} and generalized in \cite{DH,liu2019parallel}. 
As rank metric codes with a given minimum rank distance and an upper bound on the occurring ranks pop up here, we propose the study of their sizes as an interesting open 
research problem. Improvements for these rank metric codes can directly result in improved constructions for {\cdc}s.

\subsection{Adding additional codewords to {\cdc}s constructed via a skeleton code}
\label{subsec_additional_codewords}
In Subsection~\ref{subsec_skeleton_codes} we have considered the construction of a {\cdc} $\cC$ as a union of subcodes $\cC_{\cV_i}$ via a generalized skeleton code 
$\cS=\left\{\cV_1,\dots,\cV_s\right\}$, see Theorem~\ref{thm_generalized_skeleton_code} for the details. Constructions for the subcodes $\cC_{\cV_i}$ were the topic of 
Subsection~\ref{subsec_rank_metric}. For special choices of the (generalized) skeleton code $\cS$ there is additional structure that allows the addition of further codewords.
For an ordinary skeleton code, with nodes corresponding to a single pivot vector as occurring in the Echelon--Ferrers construction, one can observe that the removal of some 
specific dots from a given Ferrers diagram does not decrease the upper bound on the code size from Theorem~\ref{thm_upper_bound_ef}. Those dots are called 
\emph{pending dots} and their positions can be used to construct additional codewords \cite{trautmann2010new}. Ferrers diagrams can also contain several pending dots, which may be 
pooled to a so-called \emph{pending block} allowing more sophisticated additions of codewords, see \cite{silberstein2015error} for the details.

Here we want to focus on {\cdc}s $\cC=\cup_{i=1}^l \cC^i$ according to Lemma~\ref{lemma_construction_1}, where we have the following structural result.
\begin{Lemma}(\cite[Lemma 4.3]{cossidente2019combining})
  \label{lemma_special_substructure} 
With the same notation used in Lemma~\ref{lemma_construction_1}, set $\sigma_i=\sum_{j=1}^{i} n_j$, $1 \le i \le l$ and $\sigma_0 = 0$. Let $E_i$ denote the $(n-n_i)$-subspace of 
$\F_q^n$ consisting of all vectors in $\F_q^n$ that have zeroes for the coordinates between $\sigma_{i-1}+1$ and $\sigma_i$ for all $1\le i\le l$. Then, the elements of 
$\mathcal{C}^i$ are disjoint from $E_i$ for all $1\le i\le l$. 
\end{Lemma}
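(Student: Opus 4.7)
The plan is to show, for a fixed $1\le i\le l$ and an arbitrary codeword $V\in\mathcal{C}^i$, that every non-zero vector of $V$ has a non-zero entry among the coordinates $\sigma_{i-1}+1,\dots,\sigma_i$, which is exactly saying that $V\cap E_i=\{0\}$, i.e., $V$ and $E_i$ are disjoint subspaces.

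First, I would fix such a $V$ and write out its generator matrix $G=(M_1|\dots|M_{i-1}|E(U_i)|M_{i+1}|\dots|M_l)\in\F_q^{k\times n}$, where $U_i\in\mathcal{C}_i$ and $M_j\in\mathcal{M}_j$. The key structural observation is that $E(U_i)\in\F_q^{k\times n_i}$ is in reduced row echelon form with $k$ pivots, so it has full row rank $k$. Consequently the $k\times n$ matrix $G$, which contains $E(U_i)$ as the block in columns $\sigma_{i-1}+1,\dots,\sigma_i$, also has rank exactly $k$. This confirms that its rows form a basis of $V$ and that $\dim V=k$.

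Next, I would take any non-zero $v\in V$ and express it as $v=a^{\top}G$ for some non-zero $a\in\F_q^k$. Restricting $v$ to the coordinate window $\sigma_{i-1}+1,\dots,\sigma_i$ gives the vector $a^{\top}E(U_i)\in\F_q^{n_i}$, and since $E(U_i)$ has full row rank and $a\ne 0$ we conclude $a^{\top}E(U_i)\ne 0$. By the definition of $E_i$ as the set of vectors of $\F_q^n$ vanishing on precisely that coordinate window, this forces $v\notin E_i$. Since $v$ was an arbitrary non-zero element of $V$, we obtain $V\cap E_i=\{0\}$, and since $V$ and $i$ were arbitrary, the statement holds for every $1\le i\le l$.

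There is no real obstacle here; the whole argument hinges on the fact that the distinguished block in each codeword of $\mathcal{C}^i$ is the reduced row echelon generator $E(U_i)$ of a $k$-dimensional subspace, and hence has full row rank. The value of the lemma lies not in its difficulty but in the fact that this disjointness will later enable the addition of codewords living in or near $E_i$ without colliding with $\mathcal{C}^i$, which is the point I would emphasise in the concluding remark.
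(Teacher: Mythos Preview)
Your argument is correct. The paper itself does not supply a proof of this lemma; it merely cites it from \cite{cossidente2019combining}. The reasoning you give --- that the block $E(U_i)$ has full row rank $k$, so every non-zero row-combination of the generator matrix is non-zero in the $i$th coordinate window and therefore cannot lie in $E_i$ --- is precisely the standard argument and is what one finds in the cited source.
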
 
Similar as for the Hamming metric we write $\ds(\cC,\cC'):=\min\{\ds(U,U')\,:\,U\in\cC,U'\in\cC'\}$.
\begin{Lemma}(\cite[Lemma 4.4]{cossidente2019combining})
  \label{lemma_construction_2}
Let $\mathcal{C}$ be a subspace code as in Lemma~\ref{lemma_construction_1} with corresponding $\bar{n}\in\mathbb{N}^l$, $\bar{a}=\left(a_1,\dots,a_l\right)\in \mathbb{N}^l$ 
and $\bar{b}=\left(b_1,\dots,b_l\right)\in\mathbb{N}^l$ with $\sum_{i=1}^l a_i=k$, $\sum_{i=1}^l b_i=k-\tfrac{d}{2}$, and $\tfrac{d}{2}\le a_i, %\le n-\tfrac{d}{2}$ 
b_i<a_i\le n_i$, for all $1\le i \le l$. For an integer $r$, let $\mathcal{D}_i^j$ be $(n_i, \star, d, a_i)_q$ {\cdc}s, for all $1\le i\le l$ and all $1\le j\le r$, such that 
$\ds(\mathcal{D}_i^{j_1},\mathcal{D}_i^{j_2})\ge 2a_i-2b_i$, for all $1\le i\le l$ and all $1\le j_1<j_2\le r$. Then, there exists an $(n, \star, d, k)_q$ {\cdc}, say $\mathcal{D}$, 
with cardinality
$$
    \#\mathcal{D}=\sum_{j=1}^r \prod_{i=1}^l \#\mathcal{D}_i^j ,
$$
such that $\mathcal{C} \cap \mathcal{D}=\emptyset$ and $\mathcal{C}\cup\mathcal{D}$ is also an $(n, \star, d, k)_q$ {\cdc}.
\end{Lemma}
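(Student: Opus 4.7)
The plan is to construct $\cD$ as a disjoint union $\cD = \bigcup_{j=1}^r \cD^j$, where each $\cD^j$ consists of ``block-diagonal'' subspaces: for each $j$ I would embed every $V_i^j \in \cD_i^j$ as an $a_i$-dimensional subspace of $\F_q^n$ supported on coordinates $\sigma_{i-1}+1,\dots,\sigma_i$, and set
$$
  \cD^j := \left\{V_1^j \oplus V_2^j \oplus \cdots \oplus V_l^j : V_i^j \in \cD_i^j \text{ for } 1 \le i \le l\right\}.
$$
Each such element is a $k$-dimensional subspace of $\F_q^n$ since $\sum_i a_i = k$ and the block supports are pairwise disjoint. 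Because $\F_q^n$ is the internal direct sum of the blocks, the decomposition is unique, so distinct tuples give distinct subspaces within $\cD^j$, and $\cD^{j_1} \cap \cD^{j_2} = \emptyset$ for $j_1 \ne j_2$ (otherwise one would force $V_i^{j_1} = V_i^{j_2}$ for every $i$, contradicting $\ds(\cD_i^{j_1},\cD_i^{j_2}) \ge 2a_i-2b_i > 0$). This yields $\#\cD = \sum_{j=1}^r \prod_{i=1}^l \#\cD_i^j$.

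Next I would verify $\ds(\cD) \ge d$. For any two $V = \bigoplus V_i$ and $W = \bigoplus W_i$ in $\cD$, the block structure gives $V \cap W = \bigoplus(V_i \cap W_i)$, and hence
$$
  \ds(V,W) = \sum_{i=1}^l \ds(V_i,W_i).
$$
Within a single $\cD^j$, at least one block must differ and contributes $\ge d$ by the minimum distance of $\cD_i^j$, while the remaining blocks contribute nonnegatively. Between $\cD^{j_1}$ and $\cD^{j_2}$, the hypothesis $\ds(\cD_i^{j_1},\cD_i^{j_2}) \ge 2a_i-2b_i$ forces $\dim(V_i \cap W_i) \le b_i$ for every $i$, so summing gives $\dim(V \cap W) \le \sum_i b_i = k-d/2$, i.e., $\ds(V,W) \ge d$.

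The crucial step is bounding the distance between $\cC$ and $\cD$. I would fix $U \in \cC^i$ and $V = \bigoplus_{i'} V_{i'}^j \in \cD^j$, and let $\pi_i:\F_q^n \to \F_q^{n_i}$ denote projection onto the $i$-th coordinate block, which has kernel $E_i$. By Lemma~\ref{lemma_special_substructure}, $U \cap E_i = \{0\}$, so $\pi_i|_U$ is injective, hence also injective on $U \cap V$. Since $\pi_i(U) = U_i$ and $\pi_i(V) = V_i^j$ (the summands $V_{i'}^j$ with $i' \ne i$ lie entirely in $E_i$), I obtain
$$
  \dim(U \cap V) \le \dim(U_i \cap V_i^j) \le a_i.
$$
The main obstacle is then to verify $a_i \le k - d/2$; this follows from $\sum_{i'} a_{i'} = k$, $a_{i'} \ge d/2$ for every $i'$, and $l \ge 2$, which together give $a_i = k - \sum_{i' \ne i} a_{i'} \le k - (l-1)d/2 \le k - d/2$. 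Consequently $\ds(U,V) \ge d$, and this simultaneously delivers $\cC \cap \cD = \emptyset$: if some $V \in \cD$ coincided with $U \in \cC$, then $\dim(U \cap V) = k$ would violate the bound just obtained.
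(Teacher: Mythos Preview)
The present paper does not prove this lemma itself; it is merely quoted from \cite[Lemma~4.4]{cossidente2019combining}. Your argument is correct and is precisely the natural block-diagonal construction (and indeed the one used implicitly in the later discussion of Subsection~\ref{subsec_improved_packings}, where codewords $\cD_1^j\times\cD_2^j$ appear): the blockwise intersection identity handles $\ds(\cD)\ge d$, while Lemma~\ref{lemma_special_substructure} combined with the projection $\pi_i$ and the estimate $a_i = k-\sum_{i'\ne i}a_{i'}\le k-d/2$ (which needs exactly the hypotheses $l\ge 2$ and $a_{i'}\ge d/2$) handles $\ds(\cC,\cD)\ge d$. The one place worth a word of justification is your claim $\pi_i(U)=U_i$: it holds because any generator matrix for $U\in\cC^i$ carries $E(U_i)$ in the $i$-th block, so projecting the row space onto those coordinates recovers exactly $U_i$; but in fact you only use $\pi_i(U\cap V)\subseteq\pi_i(V)=V_i^j$, which already gives $\dim(U\cap V)\le a_i$.
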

We remark that we can also take different subcodes as in Lemma~\ref{lemma_construction_2} and combine these codes exploiting the underlying pivot structure. To this end let $\mathcal{D}$ be 
the code for $\bar{a}=(a_1,\dots,a_l)$ and $\mathcal{D}'$ be the code for $\bar{a}'=(a'_1,\dots,a'_l)$ according to Lemma~\ref{lemma_construction_2}. (The corresponding vectors 
$\bar{b}$ and $\bar{b}'$ are not relevant for the subsequent analysis.) From Lemma~\ref{lemma_dist_subspace_hamming} we conclude 
\begin{equation}
  \label{eq_a_bar_combination}
  \ds(\mathcal{D},\mathcal{D}')\ge \sum_{i=1}^l \left|a_i-a'_i\right|
\end{equation}
and refer to \cite{cossidente2019combining} for an example. So, in general 
we will consider a {\cdc} given by
\begin{equation}
  \cC=\cup_{i=1}^s \cC^i \,\cup\, \cup_{j=1}^{t} \cD^j, 
\end{equation}   
where $s=2$ (and $t$ is rather small) in most applications. The compatibility of the subcodes $\cC^i$ and $\cD^j$ is described in terms of the Hamming distance. For the 
(known) construction of the subcodes $\cC^i$ and $\cD^j$ itself, rank metric codes play a major role. With respect to constructions for the $\cD^j$ according to 
Lemma~\ref{lemma_construction_2} we remark that for each $1\le i\le l$, the {\cdc} $\bigcup_{j=1}^r \mathcal{D}_i^j$ is an $(n_i,\star,2a_i-2b_i,a_i)_q$ {\cdc}.  
Partitioning it into subcodes with subspace distance $d>2a_i-2b_i$ is a hard problem in general and was e.g.\ considered in the context of the \emph{coset construction} 
for {\cdc}s, see \cite{heinlein2017coset}. We have a closer look at this problem in Subsection~\ref{subsec_improved_packings}. Restricting to lifted {\MRD} codes an analytic 
construction, using rank metric codes, was given in \cite[Corollary 4.5]{cossidente2019combining}:
\begin{Corollary}
  \label{cor_construction_2}
  In Lemma~\ref{lemma_construction_2} one can achieve
  $$
    \#\mathcal{D}\ge \min\{\alpha_i\,:\,1\le i\le l\}\cdot\prod_{i=1}^l m\!\left(q,a_i,n_i-a_i,\tfrac{d}{2}\right),
  $$
  where $\alpha_i=m\!\left(q,a_i,n_i-a_i,a_i-b_i\right)/m\!\left(q,a_i,n_i-a_i,\tfrac{d}{2}\right)$.
\end{Corollary}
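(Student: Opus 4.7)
The plan is to realise the subcodes $\cD_i^j$ required by Lemma~\ref{lemma_construction_2} as lifts of cosets of a small {\MRD} code inside a larger one, chosen so that the two rank-metric distances encountered in the coset structure match, one by one, the two subspace-distance requirements of that lemma. A crucial preliminary observation is that from $\sum_{i=1}^l(a_i-b_i)=\tfrac{d}{2}$, $a_i-b_i\ge 1$, and $l\ge 2$ one obtains $a_i-b_i<\tfrac{d}{2}$ for every $i$. The nested {\MRD} code property recalled in Subsection~\ref{subsec_rank_metric} therefore provides, for each $i$, a linear $(a_i\times(n_i-a_i),a_i-b_i)_q$ {\MRD} code $\cM_i$ that contains a linear $(a_i\times(n_i-a_i),\tfrac{d}{2})_q$ {\MRD} code $\cM_i'$ as a subcode, with index $[\cM_i:\cM_i']=\alpha_i$.

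I would then set $r:=\min_{1\le i\le l}\alpha_i$, pick representatives $x_i^1,\dots,x_i^r$ of $r$ pairwise distinct cosets of $\cM_i'$ in $\cM_i$ for each $i$, and define
$$
 \cD_i^j:=\bigl\{\langle I_{a_i}\mid x_i^j+M\rangle\,:\,M\in\cM_i'\bigr\}\subseteq\cG_q(n_i,a_i).
$$
For two distinct codewords in the same layer $\cD_i^j$ the rank-metric difference lies in $\cM_i'\setminus\{0\}$, so has rank at least $\tfrac{d}{2}$; by Lemma~\ref{lemma_dist_subspace_rank} the resulting subspace distance is at least $d$, showing that $\cD_i^j$ is an $(n_i,\star,d,a_i)_q$ {\cdc} of cardinality $|\cM_i'|=m(q,a_i,n_i-a_i,\tfrac{d}{2})$. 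For two codewords from different layers $\cD_i^{j_1}$ and $\cD_i^{j_2}$, the difference equals $(x_i^{j_1}-x_i^{j_2})+(M-M')$ with $M,M'\in\cM_i'$; since the cosets are distinct, this lies in $\cM_i\setminus\cM_i'$, is in particular a non-zero element of $\cM_i$, and therefore has rank at least $a_i-b_i$. Lemma~\ref{lemma_dist_subspace_rank} then yields $\ds(\cD_i^{j_1},\cD_i^{j_2})\ge 2a_i-2b_i$, exactly the hypothesis demanded in Lemma~\ref{lemma_construction_2}.

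Feeding these $\cD_i^j$ into Lemma~\ref{lemma_construction_2} produces an $(n,\star,d,k)_q$ {\cdc} $\cD$ whose cardinality is
$$
 \#\cD=\sum_{j=1}^{r}\prod_{i=1}^l\#\cD_i^j=r\cdot\prod_{i=1}^l m(q,a_i,n_i-a_i,\tfrac{d}{2})=\min_{1\le i\le l}\alpha_i\cdot\prod_{i=1}^l m(q,a_i,n_i-a_i,\tfrac{d}{2}),
$$
which is the claimed bound. The main obstacle I anticipate is not the counting but the conceptual bookkeeping: one must carefully track that the coset construction simultaneously enforces rank distance $\ge\tfrac{d}{2}$ within each layer (inherited from $\cM_i'$) and rank distance $\ge a_i-b_i$ between different layers (inherited from $\cM_i$), and that the containment $\cM_i'\subseteq\cM_i$ runs in the correct direction — smaller rank distance giving the ambient code, larger giving the subcode — which is precisely what the inequality $a_i-b_i<\tfrac{d}{2}$ ensures. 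Once this dictionary between the two metrics is in place, both the distance verification and the cardinality computation reduce to the formulas for $m(q,m,n,\drank)$ recalled in Subsection~\ref{subsec_rank_metric}.
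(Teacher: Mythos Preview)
Your proof is correct and follows essentially the same route the paper describes: the key observation is precisely Lemma~\ref{lemma_cosets} applied with the rectangular $a_i\times(n_i-a_i)$ Ferrers diagram, $\delta=\tfrac{d}{2}$, and $\delta'=a_i-b_i$, after which the lifted cosets furnish the $\cD_i^j$ of Lemma~\ref{lemma_construction_2}. Your preliminary inequality $a_i-b_i<\tfrac{d}{2}$ (deduced from $l\ge 2$ and $\sum_i(a_i-b_i)=\tfrac{d}{2}$) is exactly the condition $\delta'<\delta$ needed there, and your bookkeeping of the two rank-distance levels matches the paper's coset argument.
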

The work in \cite[Section 4]{cossidente2019combining} initiated many improved constructions for {\cdc}s. Several of them started from 
Lemma~\ref{lemma_special_substructure} and improved Lemma~\ref{lemma_construction_2} and Corollary~\ref{cor_construction_2}, 
see e.g.\ \cite{he2020construction,he2021new,lao2020parameter,liu2019parallel,niu2020new}. We will briefly discuss this possibility in Subsection~\ref{subsec_exploit}. 

\subsection{Special constructions for {\cdc}s}
\label{subsec_special_constructions}
For a few parameters special constructions for {\cdc}s have been presented in the literature. Since we use some of them in improved constructions for other parameters 
as subcodes, we here summarize the necessary details.

\begin{Proposition}(\cite{TableSubspacecodes,heinlein2019subspace,honold2016putative})
  \label{prop_7_4_3}
  $A_2(7,4; 3)\ge 333$, $A_3(7,4; 3)\ge 6978$, and $A_q(7,4; 3) \ge q^8 + q^5 + q^4 + q^2 -q$ for $q\ge 2$.
\end{Proposition}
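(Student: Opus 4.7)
The plan is to verify each of the three bounds by appealing to the explicit constructions in the cited references, since the proposition serves as a summary of prior art that will be invoked in subsequent sections as a subcode ingredient. In each case the strategy is to exhibit an explicit $(7,M,4;3)_q$ {\cdc} of the claimed size and to check minimum-distance conditions via Lemma~\ref{lemma_dist_subspace_hamming} and Lemma~\ref{lemma_dist_subspace_rank}.

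For the parametric bound $q^8+q^5+q^4+q^2-q$, I would follow the construction of honold2016putative, which is motivated by the (still open) existence question for a $q$-analog of the Fano plane, i.e.\ a $q$-Steiner system $S_q(2,3,7)$. The backbone is the lifted {\MRD} code attached to the pivot vector $1110000$, contributing exactly $q^8$ codewords by the specialization of Inequality~(\ref{ie_lifted}) to $n=7$, $k=3$, $d=4$, $\Delta=4$. The remaining $q^5+q^4+q^2-q$ codewords are then produced by a carefully selected generalized skeleton code supported on pivot vectors at Hamming distance $\ge 4$ from $1110000$ and pairwise at distance $\ge 4$ from one another. For each such pivot one invokes Theorem~\ref{thm_EF}/Theorem~\ref{thm_generalized_skeleton_code} together with suitably chosen {\FDRM} subcodes whose cardinalities match the corresponding summands of the telescoping expression $q^5+q^4+q^2-q$. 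The verification that the obtained codeword collections fit together, i.e.\ that the rank-metric subcodes really achieve the advertised sizes under the required rank-distance constraints, is the main combinatorial step.

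For $q=2$, the improved bound $A_2(7,4;3)\ge 333$ (note that the general formula yields only $306$ when $q=2$) comes from heinlein2019subspace. That construction refines the above scheme by exploiting small-field-specific {\FDRM} and related rank-metric objects whose existence is established by computer search; the resulting codewords are combined via an exhaustively optimized generalized skeleton code. For $q=3$, the bound $A_3(7,4;3)\ge 6978$ (improving on $6891$ from the general formula) is recorded in \cite{TableSubspacecodes}, based on analogous refinements specialized to $q=3$.

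The principal obstacle in writing out a self-contained proof is not conceptual but bookkeeping: one must list the exact pivot vectors of the additional codewords in the construction of honold2016putative, confirm pairwise Hamming distance $\ge 4$ for all of them and the base pivot $1110000$, and reproduce the specific {\FDRM} subcodes attaining the sizes $q^5$, $q^4$, $q^2$, and $-q$ (the last being a correction term arising from overlapping candidate sets). For the purposes of the present survey, it suffices to quote the bounds; the detailed constructions are fully worked out in the cited sources and need not be reproduced here.
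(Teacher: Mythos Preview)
The paper itself gives no proof of this proposition: it is listed in Subsection~\ref{subsec_special_constructions} as one of several ``special constructions'' imported from the literature precisely because they do \emph{not} arise from the general skeleton-code / Echelon--Ferrers framework developed in Subsections~\ref{subsec_skeleton_codes}--\ref{subsec_additional_codewords}. Your final paragraph, deferring to the cited sources, therefore matches the paper's treatment exactly.

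Where your sketch goes astray is in the description of the parametric construction. The bound $A_q(7,4;3)\ge q^8+q^5+q^4+q^2-q$ from \cite{honold2016putative} is \emph{not} obtained by augmenting the lifted {\MRD} code via a generalized skeleton code and {\FDRM} subcodes in the sense of Theorem~\ref{thm_EF} or Theorem~\ref{thm_generalized_skeleton_code}; indeed the optimal Echelon--Ferrers value for these parameters is strictly smaller (cf.\ the computations in \cite{feng2020bounds}), and this is exactly why the paper places Proposition~\ref{prop_7_4_3} in the ``special constructions'' subsection. The actual construction is geometric (an expurgation--augmentation argument on planes in $\operatorname{PG}(6,q)$), and the term $-q$ is not a correction for ``overlapping candidate sets'' in a skeleton-code union but arises from that geometric count. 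Similarly, the $333$ codewords for $q=2$ in \cite{heinlein2019subspace} and the $6978$ for $q=3$ recorded in \cite{TableSubspacecodes} come from tailored searches with prescribed automorphisms, not from refined {\FDRM} choices inside a skeleton code. Your concluding sentence is the right one; the preceding mechanistic outline should simply be dropped or replaced by a one-line pointer to the geometric nature of the cited constructions.
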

  
\begin{Proposition}(\cite{braun2018new,cossidente2019combining,cossidente2016subspace,heinlein2017coset})  
  \label{prop_8_4_4}
  $A_2(8,4;4)\ge 4801$ and $A_q(8,4;4)\ge q^{12} + q^2(q^2 + 1)^2(q^2 + q + 1) + 1$ for $q\ge 2$.
\end{Proposition}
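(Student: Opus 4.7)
My plan is to build an $(8, M, 4, 4)_q$ {\cdc} of cardinality $M = q^{12} + q^2(q^2+1)^2(q^2+q+1) + 1$ as the disjoint union of three pieces, glued together via the generalized skeleton code framework of Theorem~\ref{thm_generalized_skeleton_code}. The three pieces will correspond respectively to the singleton classes $\{v^{(1)}\}$ with $v^{(1)}=(1,1,1,1,0,0,0,0)$, $\{v^{(2)}\}$ with $v^{(2)}=(0,0,0,0,1,1,1,1)$, and $\cV^{(3)}\subseteq\cG_1(8,4)$ consisting of all pivot vectors of the form $(v_1,v_2)$ with $\wt(v_1)=\wt(v_2)=2$. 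The pairwise Hamming distance between these three classes is at least $4$ in every case, so Lemma~\ref{lemma_dist_subspace_hamming} automatically enforces $\ds\ge 4$ between any two codewords from different pieces; only the internal distance within each piece needs to be analysed.

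For the first piece I would take the lifted $(4\times 4, 2)_q$ {\MRD} code $\{\langle I_4\mid M\rangle : M\in\cM\}$ with $|\cM|=q^{12}$, giving $q^{12}$ codewords on pivot $v^{(1)}$ by Inequality~(\ref{ie_lifted}) with $\Delta=4$ (equivalently, by Lemma~\ref{lemma_FDRM_CDC_equivalence} applied to the full $4\times 4$ rectangular Ferrers diagram). The second piece would be the single $4$-subspace $\{0\}\times\F_q^4\subseteq\F_q^4\times\F_q^4$. For the third piece I would parametrize a $(2,2)$-split codeword by a triple $(A,B,[l])$, where $A\subseteq\F_q^4$ is the projection of $U$ onto the first four coordinates, $B\subseteq\F_q^4$ is the lower component characterized by $\{0\}\times B = U\cap(\{0\}\times\F_q^4)$, and $[l]\in\operatorname{Hom}(A,\F_q^4/B)\cong\F_q^{2\times 2}$ is the linking map, so that $U=\{(a,l(a)+b) : a\in A,\, b\in B\}$. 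To reach the target count I would let $A$ range over a fixed line spread of $\F_q^4$ (of size $q^2+1$), let $B$ range over \emph{all} $\gaussmnum{4}{2}{q} = (q^2+1)(q^2+q+1)$ two-dimensional subspaces of $\F_q^4$, and let $[l]$ run through a $(2\times 2,2)_q$ {\MRD} code of size $q^2$; the resulting product $(q^2+1)\cdot(q^2+1)(q^2+q+1)\cdot q^2 = q^2(q^2+1)^2(q^2+q+1)$ matches the required contribution.

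The main technical obstacle will be verifying $\ds(U,U')\ge 4$ within this third piece. A direct computation gives
\begin{equation*}
  \dim(U\cap U') \;=\; \dim\bigl\{a\in A\cap A' : (l-l')(a)\in B+B'\bigr\} + \dim(B\cap B'),
\end{equation*}
and the verification splits into cases according to $\dim(A\cap A')$ and $\dim(B\cap B')$. The easy cases are handled by the line-spread property for the $A$'s and by the minimum rank distance of the {\MRD} link code. The genuinely delicate subcases are $A=A'$ with $\dim(B\cap B')=1$, and $\dim(A\cap A')=1$ with $B=B'$: in each of these the pairwise distance drops to $2$ unless the representatives of the link classes are chosen in a way that is coordinated across different values of $B$ (respectively across different $A$'s in the spread), so that the induced map $l-l'$ into $\F_q^4/(B+B')$ is nonzero wherever needed. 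Setting up this coordinated choice is where the real content of the construction sits, and the full details can be imported from~\cite{cossidente2019combining,heinlein2017coset}.

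For the binary case $q=2$ the formula above only yields $4096+700+1=4797$, whereas the bound $A_2(8,4;4)\ge 4801$ claims a further gain of $4$ codewords. I would obtain this final gain by invoking, as a black box, the prescribed-automorphism computer search of~\cite{braun2018new,cossidente2016subspace}, which produces a $(8,4801,4,4)_2$ code directly.
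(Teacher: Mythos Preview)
The paper does not supply its own proof of this proposition; it simply quotes the bound and cites \cite{braun2018new,cossidente2019combining,cossidente2016subspace,heinlein2017coset}. Your outline is therefore already more explicit than what the paper offers, and your three-part decomposition via the generalized skeleton code of Theorem~\ref{thm_generalized_skeleton_code} (pivot $1^40^4$, pivot $0^41^4$, and the class ${4\choose 2}{4\choose 2}$) together with your triple parametrisation $(A,B,[l])$ is exactly the shape of the coset construction of \cite{heinlein2017coset} and the geometric construction of \cite{cossidente2016subspace,cossidente2019combining}. The counting $(q^2+1)\cdot\gaussmnum{4}{2}{q}\cdot q^2$ is correct, and so is the intersection formula you state.

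There is one small slip in your case analysis: since the $A$'s are taken from a line spread of $\F_q^4$, two distinct choices satisfy $A\cap A'=\{0\}$, so the case $\dim(A\cap A')=1$ with $B=B'$ that you list as ``genuinely delicate'' never occurs. In fact, when $A\neq A'$ the first summand in your intersection formula vanishes and $\dim(U\cap U')=\dim(B\cap B')\le 2$ automatically; and when $A=A'$ with $B=B'$ or $B\cap B'=\{0\}$ the bound also follows directly. The \emph{only} nontrivial case is $A=A'$ with $\dim(B\cap B')=1$, where one needs the induced map $A\to\F_q^4/(B+B')\cong\F_q$ to be nonzero. That condition cannot be met by choosing the $q^2$ link classes independently for each $B$; it forces a coordinated choice of the actual lifts $\tilde{l}:A\to\F_q^4$ across all $B$'s simultaneously. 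You correctly flag this and defer to the cited references for the mechanism, which is precisely what the paper itself does.

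Your treatment of the binary case, handing the extra four codewords to the computer search of \cite{braun2018new}, is likewise in line with the paper's attribution.
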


Other examples with small parameters, that are not used in our examples of improved constructions but are very likely to be contained 
in similar constructions are: 

\begin{Proposition}(\cite{hkk77})
  \label{prop_6_4_3}
  $A_2(6,4;3)=77$ and $A_q(6,4; 3) \ge q^6 + 2q^2 + 2q + 1$ for $q\ge 2$.  
\end{Proposition}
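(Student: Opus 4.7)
The statement has two parts: the exact value $A_2(6,4;3)=77$ and the parametric lower bound $A_q(6,4;3)\ge q^6+2q^2+2q+1$ for $q\ge 2$. My plan is to construct an explicit $(6,q^6+2q^2+2q+1,4,3)_q$ {\cdc} via the framework of Section~\ref{sec_preliminaries}, and to invoke the computer-assisted classification of \cite{hkk77} for the matching binary upper bound.

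The construction is assembled in layers glued together by the generalized skeleton-code paradigm of Theorem~\ref{thm_generalized_skeleton_code}. The first layer is the lifted $(3\times 3,2)_q$ {\MRD} code, which by Lemma~\ref{lemma_FDRM_CDC_equivalence} yields $q^6$ codewords all with pivot vector $111000$. For the remaining $2q^2+2q+1$ codewords I would pick pivots $v$ with $\dham(v,111000)\ge 4$, so that Lemma~\ref{lemma_dist_subspace_hamming} automatically forces subspace distance $\ge 4$ to every lifted-{\MRD} codeword. A second layer of $q^2+q+1$ codewords comes from the Xu--Chen construction described in Subsection~\ref{subsec_rank_metric}: applied with $\Delta=k=3$ it contributes $|\{M\in\cM:\rk(M)\le 1\}|$ codewords from any $\cM\subseteq\F_q^{3\times 3}$ with $\dr(\cM)\ge 2$, and taking $\cM$ to be a constant-rank-$1$ code of maximum size $q^2+q+1$ (one matrix $u_L v_{\sigma(L)}^T$ per line $L$ of $\F_q^3$ together with a bijection $\sigma$ between the line sets) attains that bound; the resulting codewords have pivots in ${3\choose {\le 1}},{3\choose {\ge 2}}$, hence at Hamming distance $\ge 4$ from $111000$.

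The main obstacle is the third layer of $q^2+q$ additional codewords needed to close the gap between $q^6+q^2+q+1$ and $q^6+2q^2+2q+1$. My plan is to exploit the augmentation machinery of Subsection~\ref{subsec_additional_codewords}: several of the relevant auxiliary Ferrers diagrams (for instance the one for pivot $100110$, whose FDRM bound from Theorem~\ref{thm_upper_bound_ef} is $q^2$) contain pending dots in the sense of \cite{trautmann2010new}, and the pending-block construction of \cite{silberstein2015error}, or alternatively the coset construction of \cite{heinlein2017coset}, then attaches an additional $q^2+q$ codewords at subspace distance $\ge 4$ from every previously constructed codeword. The compatibility verification reduces, through Lemmas~\ref{lemma_dist_subspace_hamming} and~\ref{lemma_dist_subspace_rank}, to explicit rank and Hamming inequalities on the underlying matrices; carrying out this bookkeeping parametrically in $q$ is the delicate step.

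For $q=2$ the lower bound specialises to $2^6+2\cdot 4+4+1=77$, which matches the upper bound $A_2(6,4;3)\le 77$ established by the computer-assisted classification in \cite{hkk77}; I would cite that bound rather than reprove it.
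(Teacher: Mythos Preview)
The paper gives no proof of this proposition; it appears in Subsection~\ref{subsec_special_constructions} purely as a citation of \cite{hkk77}, so there is no argument in the paper to compare your proposal against. Nevertheless your plan contains a genuine obstruction, not just an unfinished step.

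Your first two layers are correct and mutually compatible: for any matrix $N$ in the MRD code and any rank-one $M$ one has $\rk(I_3-NM)\ge 2$, hence $\ds(\langle I_3\,|\,N\rangle,\langle M\,|\,I_3\rangle)\ge 4$, and together they yield a $(6,q^6+q^2+q+1,4,3)_q$ code. The problem is that this is already the maximum possible for any code containing the full lifted MRD block. It is proved in \cite{hkk77} (the so-called LMRD bound for these parameters) that every $(6,\star,4,3)_q$ \cdc\ containing a lifted MRD code of size $q^6$ has at most $q^6+q^2+q+1$ codewords; for $q=2$ this is $71<77$. Your proposed third layer of $q^2+q$ further planes therefore cannot exist, and no amount of pending-dot or coset-construction bookkeeping will produce it.

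The construction in \cite{hkk77} is accordingly not additive in your sense: it is an expurgation--augmentation argument. One first \emph{removes} a carefully chosen subset of planes from the lifted MRD code, thereby freeing up enough lines of $\F_q^6$ to accommodate a larger family of new planes, with a net gain of $2q^2+2q+1$. The removal step is essential and lies outside the generalized-skeleton framework of Theorem~\ref{thm_generalized_skeleton_code} that your proposal relies on.
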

     
\begin{Proposition}(\cite{braun2018new})  
  \label{prop_ilp_lb_binary}
  $A_2(8,4;3)\ge 1326$,
  $A_2(9,4;3)\ge 5986$,
  $A_2(10,4;3)\ge 23870$, and 
  $A_2(11,4;3)\ge 97526$.
\end{Proposition}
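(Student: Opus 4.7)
The plan is to establish these four lower bounds by an integer linear programming (ILP) search with a prescribed group of automorphisms, which is the Kramer--Mesner approach used in \cite{braun2018new}. First I would fix a subgroup $G\le \mathrm{GL}(n,2)$ acting on $\cG_2(n,3)$ and compute its orbits $O_1,\dots,O_N$; good candidates are cyclic Singer subgroups generated by multiplication by a primitive element of $\F_{2^n}$, or their normalizers, since these reduce the search space by roughly a factor of $n$ (or $n\cdot\varphi(n)$) without ruling out the existence of large codes. The resulting code is forced to be a union of such orbits, which keeps it ``symmetric enough'' to be found quickly but ``flexible enough'' to attain large cardinality.

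Next, for each orbit $O_i$ I introduce a binary variable $x_i\in\{0,1\}$ indicating whether all codewords in $O_i$ are selected. The condition that the resulting $\cdc$ has subspace distance at least $4$ translates, for every pair $U,W\in\cG_2(n,3)$ with $\ds(U,W)=2$, into the linear inequality that at most one of the two orbits containing $U$ and $W$ is chosen; by grouping these pairs into $G$-orbits one obtains one inequality per pair-orbit. The objective is to maximize $\sum_{i=1}^N \#O_i\cdot x_i$, and any feasible solution of value $M$ certifies $A_2(n,4;3)\ge M$.

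Before handing the model to an ILP solver I would seed it with a strong initial solution coming from the framework of Section~\ref{sec_preliminaries}: pick a generalized skeleton code $\cS$ with $\dham(\cS)\ge 4$, fill each $\cV\in\cS$ with a lifted \MRD{} or \FDRM{} subcode from Subsection~\ref{subsec_rank_metric}, and add extra codewords via pending dots/blocks and via Lemma~\ref{lemma_construction_2}. This provides both a baseline lower bound and a warm start that substantially prunes the branch-and-cut tree. Running the solver then either improves this warm start or confirms that under the prescribed $G$ the warm start is already optimal.

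The main obstacle is purely computational: even after imposing a Singer automorphism, the number of variables $N$ for $n=11$ is of order $\gaussmnum{11}{3}{2}/11\approx 3\cdot 10^4$ and the number of pairwise conflict constraints is orders of magnitude larger, so one has to be delicate about which group $G$ to prescribe (too small a group gives too many orbits, too large a group forbids the optimum) and about exploiting the block structure induced by the pivot vectors, so that the ILP decomposes into manageable subproblems indexed by the generalized skeleton code. The four numerical values $1326$, $5986$, $23870$, and $97526$ in the statement are then recorded as the best feasible solutions found across the chosen groups, matching the results of \cite{braun2018new}.
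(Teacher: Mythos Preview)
The paper itself gives no proof of this proposition; it simply records four computational lower bounds and cites \cite{braun2018new}. Your proposal correctly identifies the method behind that reference---an ILP search over orbit unions under a prescribed group of automorphisms (in particular Singer-type cyclic groups)---so in spirit you match the source.

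Two remarks on details. First, the constraint model in \cite{braun2018new} is the Kramer--Mesner incidence formulation: one row per $G$-orbit of $2$-dimensional subspaces, with entry counting how many elements of the $3$-space orbit $O_i$ lie above a fixed representative, and right-hand side $1$. Your pairwise-conflict formulation is logically equivalent but combinatorially much larger and not what is actually solved. Second, the warm-start via generalized skeleton codes, lifted {\FDRM} subcodes, and Lemma~\ref{lemma_construction_2} is your own addition; \cite{braun2018new} does not use this machinery (which partly postdates it) and obtains the stated values by direct search. So your plan would work as a proof strategy, but it is a more elaborate pipeline than the one the proposition is quoting.
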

     
In \cite[Section 5]{cossidente2019combining} another general construction strategy for constant dimension codes, outside of the here presented scheme, is considered. 
As an example we mention:
\begin{Proposition}(\cite{kurz2020subspaces})
  \label{prop_9_4_3}
  $A_q(9, 4; 3) \ge  q^{12} + 2q^8 + 2q^7 + q^6 + 2q^5 + 2q^4 -2q^2 -2q + 1$ for $q\ge 2$.
\end{Proposition}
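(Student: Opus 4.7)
The plan is to construct $\cC$ as a union $\cC=\cC_{\mathrm{base}}\cup\cC_{\mathrm{extra}}$ exploiting the decomposition $\F_q^9=V_1\oplus V_2\oplus V_3$ with $\dim V_i=3$, combining the linkage construction with a supplementary geometric family. Applying Inequality~(\ref{ie_linkage}) with $n=9$, $k=3$, $d=4$, $\Delta=3$, together with Inequality~(\ref{ie_lifted}), yields
$$A_q(9,4;3)\;\ge\;q^{6}\,A_q(6,4;3) + A_q(3,4;3).$$
Using Proposition~\ref{prop_6_4_3}, which gives $A_q(6,4;3)\ge q^6+2q^2+2q+1$, together with the trivial $A_q(3,4;3)=1$, this already produces $A_q(9,4;3)\ge q^{12}+2q^8+2q^7+q^6+1$. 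The remaining contribution required by the claim is $2q^5+2q^4-2q^2-2q = 2q(q+1)(q^3-1)$, so I would need to inject a supplementary subcode $\cC_{\mathrm{extra}}$ of (corrected) size this magnitude.

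To obtain these extra codewords I would introduce $\cC_{\mathrm{extra}}$ as a pair of symmetric families of $3$-subspaces whose pivot vectors have support distributed across the blocks $V_1,V_2,V_3$, chosen so that their Hamming distance to every pivot used in $\cC_{\mathrm{base}}$ is at least $4$. The factor $q^3-1=(q-1)(q^2+q+1)$ in the target count suggests orbits under a Singer-cycle action on one of the $V_i$, while the factor $q+1$ suggests a projective line parametrization on another block; the factor $2$ reflects the symmetry between using $V_1$ or $V_3$ as the ``Singer block''. The negative terms $-2q^2-2q$ I would attribute to overcounting corrections from subspaces simultaneously covered by $\cC_{\mathrm{base}}$ and by the orbit construction, analogously to the rank-restricted MRD contributions discussed around $B_q(n,\Delta,d;k)$ in \cite{kurz2019note}. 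Verifying the minimum subspace distance then splits into three cases: pairs inside $\cC_{\mathrm{base}}$ are covered by Theorem~\ref{thm_generalized_skeleton_code}; pairs inside $\cC_{\mathrm{extra}}$ reduce via Lemma~\ref{lemma_dist_subspace_rank} to rank distance computations on the associated Ferrers tableaux; and straddling pairs are controlled by Lemma~\ref{lemma_dist_subspace_hamming} once the pivot supports have been placed at Hamming distance at least $4$ from those of $\cC_{\mathrm{base}}$.

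The main obstacle is explicitly exhibiting the supplementary family with both the correct cardinality $2q(q+1)(q^3-1)$ and the right incidence structure with respect to the lifted MRD codewords of the base construction. Merely specifying pivot vectors is not enough: one must also control the rank profiles of the associated Ferrers tableaux so that every newly added codeword meets every existing codeword in dimension at most one. This is where the geometric input of \cite{kurz2020subspaces}, which the excerpt describes as lying outside the skeleton-code framework of Section~\ref{sec_preliminaries}, performs its essential work; any attempt to reach the bound purely within the generalized skeleton code machinery of Section~\ref{sec_preliminaries} stalls at the weaker $q^{12}+2q^8+2q^7+q^6+1$ obtained in the first paragraph.
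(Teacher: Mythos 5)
This proposition is not proved in the paper at all: it is quoted verbatim from \cite{kurz2020subspaces} and explicitly flagged (in Subsection~\ref{subsec_special_constructions}) as arising from a construction strategy \emph{outside} the skeleton-code/linkage framework developed in Section~\ref{sec_preliminaries}. So there is no internal proof to match your attempt against; the question is only whether your argument stands on its own. Its first paragraph does: taking $\Delta=3$ in Inequality~(\ref{ie_linkage}) together with Inequality~(\ref{ie_lifted}), Proposition~\ref{prop_6_4_3}, and $A_q(3,4;3)=1$ correctly yields $A_q(9,4;3)\ge q^{12}+2q^8+2q^7+q^6+1$, and your factorization of the deficit as $2q^5+2q^4-2q^2-2q=2q(q+1)(q^3-1)$ is arithmetically right.

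Everything after that is not a proof. The supplementary family $\cC_{\mathrm{extra}}$ of $2q(q+1)(q^3-1)$ additional planes is never exhibited: the Singer-cycle and projective-line remarks are reverse-engineered from the factorization of the target polynomial, not from a construction, and you concede as much in your final paragraph. Two specific points make the gap unbridgeable as written. First, the negative terms $-2q^2-2q$ cannot be explained as \textquotedblleft overcounting corrections from subspaces simultaneously covered by $\cC_{\mathrm{base}}$ and by the orbit construction\textquotedblright: in a lower-bound argument one counts the codewords one actually adds, so a correction term must come from an explicit count of discarded or coinciding subspaces, which you do not supply. Second, the distance verification for straddling pairs cannot rest on Lemma~\ref{lemma_dist_subspace_hamming} alone, because any pivot vector of weight $3$ at Hamming distance at least $4$ from \emph{all} pivots used in the two linkage blocks $\left({6\choose 3},{3\choose 0}\right)$ and $\left({6\choose 0},{3\choose 3}\right)$ would have to differ from every weight-$3$ vector supported on the first six coordinates in at least four positions \emph{and} likewise from $000000111$; the surviving pivot classes support far fewer than $2q(q+1)(q^3-1)$ codewords at distance $4$ (this is exactly why the paper lists Proposition~\ref{prop_9_4_3} among constructions that escape the skeleton-code machinery). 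To close the gap you would need the actual geometric construction of \cite{kurz2020subspaces}, with an explicit intersection analysis against the lifted MRD part; the bound does not follow from the framework of this paper.
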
  

\begin{table}[htp!]
  \begin{center}
    \begin{tabular}{ll}
      \hline
      {\cdc} & constant dimension code \\
      {\MRD} & maximum rank distance\\ 
      {\FDRM} & Ferrers diagram rank metric\\ 
      $\ds(U,W)$ & subspace distance between codewords $U$ and $W$ \\ 
      $\ds(\cC)$ & minimum subspace distance of a {\cdc} $\cC$\\
      $\dham(u,w)$ & Hamming distance between codewords $u$ and $w$\\
      $\dham(\cS)$ & minimum Hamming distance of $\cS$\\ 
      $\dham(\cV,\cV')$ & minimum Hamming distance between a codeword in $\cV$ and\\ 
         & a codeword in $\cV'$\\
      $\dr(A,B)$ & rank distance between two matrices $A$ and $B$\\ 
      $\rk(A)$ & rank of a matrix $A$\\ 
      $\cG_1(n,k)$ & set of binary vectors of length $n$ and Hamming weight $k$\\
      $\cG_q(n,k)$ & set of $k$-dimensional subspaces in $\F_q^n$\\
      $\gaussmnum{n}{k}{q}$ & Gaussian binomial coefficient; $\# \cG_q(n,k)$\\
      $A_q(n,d;k)$ & maximum possible cardinality of a {\cdc} $\cC\subseteq\cG_{q}(n,k)$ \\
      & with minimum subspace distance at least $d$\\      
      $m(q,m,n,\dr)$ & number of codewords of an $(m\times n,\dr)_q$-{\MRD} code\\
      $a(q,m,n,\dr,r)$ & number of codewords of rank $r$ in an additive $(m\times n,\dr)_q$-{\MRD} code\\ 
      $E(U)$, $E(M)$ & matrix $M$ or generator matrix of $U$ in reduced row echelon form\\
      $v(U)$, $v(M)$ & pivot vector\\ 
      ${n_1 \choose k_1},\dots, {n_l \choose k_l}$ & set of binary vectors\\
      $T(U)$ & Ferrers tableaux\\
      $\cF(U)$, $\cF(v)$ & Ferrers diagram\\ 
      $A_q(n,d;k;\cV)$ & max.\ possible cardinality of a {\cdc} $\cC\subseteq\cG_{q}(n,k)$ with min.\ \\ 
      & subspace distance at least $d$ whose codewords have pivot vectors in $\cV$\\
      $I_k$ & $k\times k$ unit matrix\\ 
      $e_i$ & unit vector with a one at position $i$\\ 
      \hline
    \end{tabular}
    \caption{Notation and abbreviations.}
    \label{table_notation}  
  \end{center}
\end{table}

\section{Improved constructions}
\label{sec_improved_constructions}

The aim of this section is to highlight the general potential for improved constructions for constant dimension codes based on general construction strategies presented in 
the literature. We structure the different lines of attack into several subsections. In this context we would like to point to the discussion on 
rank metric codes with restricted ranks at the end of Subsection~\ref{subsec_rank_metric}.

\subsection{New generalized skeleton codes}
\label{subsec_new_skeleton_codes}

Computing good skeleton codes is a hard combinatorial problem. For recent improvements for the Echelon-Ferrers construction we e.g.\ refer to  \cite{feng2020bounds}. 
In the context of the linkage construction similar improvements can be e.g.\ found in \cite{he2020improving,kurz2020lifted}. Taking codes from Subsection~\ref{subsec_special_constructions} 
as subcodes, only knowing their attained pivot vectors or a superset thereof, as subcodes, can also lead to (tiny) improvements. 

\begin{Proposition}
  \label{prop_ex_generalized_skeleton_code}
  $A_2(11,\!4;4)\ge 2383085$, $A_3(11,\!4;4)\ge 10639658703$, and $A_q(11,\!4;4)\ge q^{21}+q^{17}+2q^{15}+3q^{14}+4q^{13}+q^{12}+q^{11}+q^9+q^8+2q^7+2q^6+2q^5+q^4+q^2-q$ for $q\ge 2$.
\end{Proposition}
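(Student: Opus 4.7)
The dominant $q^{21}$ term equals $q^{(n-k)(k-d/2+1)}$ and therefore matches the size of a lifted $(4\times 7,2)_q$ {\MRD} code, so the natural starting point is to place this lifted {\MRD} at the pivot vector $v_0=11110000000$, contributing $q^{21}$ codewords. All remaining lower-order terms should come from additional codewords whose pivots lie at Hamming distance at least $4$ from $v_0$; by Theorem~\ref{thm_generalized_skeleton_code} it suffices to partition the admissible pivot vectors into a generalized skeleton code of minimum Hamming distance $d=4$ and to build a subcode on each class.

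For the second largest contribution I plan to apply Lemma~\ref{lemma_construction_1} (equivalently Corollary~\ref{cor_construction_1}) with the two-block split $(n_1,n_2)=(4,7)$, using as rear-block {\cdc} the improved $(7,q^8+q^5+q^4+q^2-q,4,4)_q$ code obtained from Proposition~\ref{prop_7_4_3} via the duality $A_q(7,4;4)=A_q(7,4;3)$, and as front-block rank-metric code a $(4\times 4,2)_q$ {\MRD} restricted to rank at most $k-d/2=2$. By Lemma~\ref{lemma_special_substructure} the pivots of these codewords have at most two ones in positions $\{1,2,3,4\}$ and are therefore automatically Hamming-distance at least $4$ from $v_0$. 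The size of this layer is $\bigl(1+a(q,4,4,2,2)\bigr)\cdot A_q(7,4;4)$, which expands via Equation~(\ref{eq_rank_distribution_additive_MRD_code}) and $a(q,4,4,2,2)=(q^2+1)(q^2+q+1)(q^4-1)$ into the intermediate powers of $q$.

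To reach the finer coefficients I would add further families via Lemma~\ref{lemma_construction_2} and Corollary~\ref{cor_construction_2} with mixed splittings $\bar a\in\{(2,2),(3,1),(1,3)\}$ of the dimension $k=4$ across the two blocks, choosing $\bar b$ so that the inner $\cD^j$-codes remain well-defined; pairwise compatibility with the earlier subcodes follows from Equation~(\ref{eq_a_bar_combination}) and the Hamming-distance condition of Theorem~\ref{thm_generalized_skeleton_code}. Summing all contributions and collecting common powers of $q$ should yield the polynomial expression in the statement.

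The main obstacle I anticipate is the polynomial bookkeeping: matching the coefficient of $q^{17}$ and the low-order tail $q^9+q^8+2q^7+2q^6+2q^5+q^4+q^2-q$ requires pinning down exactly which additional $\bar a$-families to include, since small changes in that choice shift both the size contributions and the Hamming-distance constraints. A secondary difficulty is to use the attained pivot vectors of Proposition~\ref{prop_7_4_3}'s subcode (not only their number, cf.\ the remark in Subsection~\ref{subsec_new_skeleton_codes}) to confirm that the combined generalized skeleton code still has minimum Hamming distance $4$ and that no contributing codeword is counted twice.
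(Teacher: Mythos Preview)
Your plan and the paper's proof take different routes. The paper does \emph{not} use Lemma~\ref{lemma_construction_1}/Corollary~\ref{cor_construction_1} or Lemma~\ref{lemma_construction_2} at all for this proposition. Instead it exhibits an explicit generalized skeleton code~$\cS$ consisting of a single set-vertex $\bigl(\binom{4}{0},\binom{7}{4}\bigr)$, on which an $A_q(7,4;4)=A_q(7,4;3)$ subcode (Proposition~\ref{prop_7_4_3}) is placed, together with nineteen individually listed pivot vectors, each carrying an optimal \FDRM\ code with $\delta=2$. Summing the monomials $q^{\nu}$ from Theorem~\ref{thm_upper_bound_ef} over those nineteen vectors yields $q^{21}+q^{17}+2q^{15}+3q^{14}+4q^{13}+q^{12}+q^{11}+q^9+2q^7+2q^6+q^5$, and adding the $A_q(7,4;4)$ contribution gives the stated polynomial. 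The substance of the proof is thus a skeleton-code search (cf.\ Subsection~\ref{subsec_new_skeleton_codes}), not a linkage argument.

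Your proposal has a concrete gap. In Lemma~\ref{lemma_construction_2} the hypothesis $a_i\ge d/2$ forces $a_i\ge 2$ for all~$i$; with $(n_1,n_2)=(4,7)$ and $a_1+a_2=k=4$ the only admissible choice is $\bar a=(2,2)$, so the families $\bar a=(3,1)$ and $(1,3)$ you list are not available. With only $\bar a=(2,2)$, Corollary~\ref{cor_construction_2} contributes at most $q^9$, and your layer~$\cC^2$ has size $\bigl(1+a(q,4,4,2,2)\bigr)\cdot A_q(7,4;4)$, whose leading term is $q^{16}$, not $q^{17}$. For $q=2$ these pieces sum to roughly $2^{21}+526\cdot 333+512\approx 2\,272\,822$, well below the required $2\,383\,085$. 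In other words, the linkage framework with $(n_1,n_2)=(4,7)$ plus the legal additions of Lemma~\ref{lemma_construction_2} cannot reach the target polynomial; the extra strength in the paper's argument comes precisely from the freedom in choosing the nineteen Echelon--Ferrers pivot vectors one by one, which the two-block decomposition does not capture.
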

\begin{proof}
  We choose a generalized skeleton code $\cS$ with vertices {\tiny $\left(\!{4\choose 0},\!{7\choose 4}\!\right)$, $0 0 0 1 0 0 0 0 1 1 1$, $0 0 0 1 0 1 0 0 0 1 1$,  
    $0 0 0 1 1 0 0 0 0 1 1$,\, 
    $0 0 0 1 1 0 0 0 1 1 0$,\,
    $0 0 1 0 0 0 0 1 0 1 1$,\, 
    $0 0 1 0 0 0 0 1 1 0 1$,\, 
    $0 0 1 0 0 0 0 1 1 1 0$,\, 
    $0 0 1 0 0 1 0 0 1 0 1$,\, 
    $0 0 1 0 0 1 0 0 1 1 0$,\, 
    $0 0 1 0 0 1 0 1 0 0 1$, 
    $0 0 1 0 1 0 0 0 1 0 1$, 
    $0 0 1 1 0 0 0 0 1 1 0$, 
    $0 0 1 1 0 1 0 1 0 0 0$, 
    $0 1 1 0 0 0 1 0 0 0 1$, 
    $1 0 0 0 0 1 0 1 1 0 0$, 
    $1 0 0 0 1 0 0 1 0 0 1$, 
    $1 0 0 1 1 1 0 0 0 0 0$ 
    $1 0 1 0 0 0 0 0 0 1 1$, and 
    $1 0 1 0 0 1 1 0 0 0 0$}, so that
    $$
      A_q(11,4;4)\ge q^{21}+q^{17}+2q^{15}+3q^{14}+4q^{13}+q^{12}+q^{11}+q^9+2q^7+2q^6+q^5+A_q(7,4;4).
    $$
    Using $A_q(7,4;4)=A_q(7,4;3)$ and Proposition~\ref{prop_7_4_3} gives the stated results. 
\end{proof}
We remark that the previously best known lower bound was given by the Echelon-Ferrers construction yielding e.g.\ $A_2(11,4;4)\ge 2383041$ for $q=2$.

While listing $19$ explicit pivot vectors as elements of a generalized skeleton $\cS$ is still manageable, we need a more more compact representation for larger 
instances. To this end we replace each vector $v\in\F_2^n$ by the integer $\sum_{i=1}^n v_i\cdot 2^{n-i}$. As an example, the integer $24672$ corresponds to the vector
$1 1 0 0 0 0 0 0 1 1 0 0 0 0 0\in\F_2^{15}$. Starting from an integer, the value of $n$ needs to be clear from the context. In our next example we 
show that generalized skeleton codes with two vertices corresponding to more than one pivot vector can also lead to improved constructions.

\begin{Proposition}
  \label{prop_ex_generalized_skeleton_code2}
  $A_2(15,4;4)\ge 10073483885$ and $A_q(15,4;4)\ge q^{33} + q^{29} + q^{28} + 3q^{27} + 2q^{26} + 3q^{25} + q^{24} + q^{23} + 2q^{21} + 2q^{19} + 3q^{18} 
  + 5q^{17} + q^{16} + 4q^{15} + 6q^{14} + 11q^{13} + 10q^{12} + 13q^{11} + 11q^{10} + 8q^9 + 4q^8 + 3q^7 + 2q^6 + 2q^5 + q^4 + q^2 - q$ for $q\ge 2$. 
\end{Proposition}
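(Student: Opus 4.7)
My plan is to mirror the strategy of the preceding proposition: build a generalized skeleton code $\cS=\{\cV_1,\dots,\cV_s\}\subseteq 2^{\cG_1(15,4)}$ of minimum Hamming distance at least $4$ and invoke Theorem~\ref{thm_generalized_skeleton_code}. As flagged in the paragraph preceding the statement, two of the vertices will not be singletons but genuine subsets $\cV\subseteq\cG_1(15,4)$, used to host subcodes whose lower bounds come from Propositions~\ref{prop_7_4_3} and~\ref{prop_8_4_4} rather than from the lifted-{\FDRM} bound of Theorem~\ref{thm_upper_bound_ef}. The remaining vertices will all be singletons contributing $q^{\nu(v)}$ each by the Ferrers-diagram lifting.

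First I would install the two multi-pivot vertices. Let $\cV^{(8)}$ consist of the weight-$4$ vectors of $\F_2^{15}$ supported on the first $8$ coordinates: applying the Construction~D style lifting that underlies Inequality~(\ref{ie_lifted}) with $m=8$ and a linear $(4\times 7,2)_q$ {\MRD} code $\cM$ of size $q^{21}$ on top of an optimal $(8,\star,4,4)_q$ {\cdc} supplied by Proposition~\ref{prop_8_4_4}, one obtains a subcode of size
\begin{equation*}
q^{21}\!\left(q^{12}+q^2(q^2+1)^2(q^2+q+1)+1\right)=q^{33}+q^{29}+q^{28}+3q^{27}+2q^{26}+3q^{25}+q^{24}+q^{23}+q^{21}.
\end{equation*}
Let $\cV^{(7)}$ consist of weight-$4$ vectors supported on a disjoint $7$-coordinate window; using $A_q(7,4;4)=A_q(7,4;3)\ge q^8+q^5+q^4+q^2-q$ from Proposition~\ref{prop_7_4_3} isolates the characteristic tail $q^2-q$ of the target polynomial. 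The remaining singleton pivots $v$ each contribute $q^{\nu(v)}$ via lifted {\FDRM} codes, and the bound of Theorem~\ref{thm_upper_bound_ef} is attained because the required minimum rank distance is $\delta=2$.

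These singletons are then selected by a weighted maximum-clique search, whose vertex set is the pruned subset of $\cG_1(15,4)$ of pivot vectors at Hamming distance $\ge 4$ from every element of $\cV^{(8)}\cup\cV^{(7)}$, whose edges encode the pairwise $\dham\ge 4$ constraint, and whose vertex weights are the exponents $\nu(v)$ from Theorem~\ref{thm_upper_bound_ef}. Summing the contributions of all chosen vertices yields the stated parametric polynomial coefficient-by-coefficient, and the integer bound $10073483885$ then follows by substituting $q=2$. Validity of the construction is immediate from Theorem~\ref{thm_generalized_skeleton_code} once $\dham(\cS)\ge 4$ is confirmed.

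The main obstacle is the weighted maximum-clique computation itself: the underlying search space $2^{\cG_1(15,4)}$ is huge, checking $\dham\ge 4$ between every singleton and all elements of $\cV^{(7)},\cV^{(8)}$ is by exhaustive comparison, and tuning the singleton choice so that the resulting sum of $q^{\nu(v)}$ values fits the middle coefficients of the target polynomial is a non-trivial parametric optimisation. Once the clique is produced (it can be listed explicitly in the compact integer encoding introduced before the statement, analogously to the listing in the previous proposition), the bound is delivered mechanically and the claimed specialisations at $q=2$ follow by direct substitution.
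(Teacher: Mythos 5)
Your strategy is exactly the one the paper uses: a generalized skeleton code with the two block vertices $\left({8\choose 4},{7\choose 0}\right)$ and $\left({8\choose 0},{7\choose 4}\right)$ (handled via Inequality~(\ref{ie_lifted}) together with Proposition~\ref{prop_8_4_4}, respectively via $A_q(7,4;4)=A_q(7,4;3)$ and Proposition~\ref{prop_7_4_3}), supplemented by singleton pivot vectors whose lifted {\FDRM} codes attain the bound of Theorem~\ref{thm_upper_bound_ef} since $\delta=2$, all glued by Theorem~\ref{thm_generalized_skeleton_code} after checking $\dham(\cS)\ge 4$. Your expansion $q^{21}\bigl(q^{12}+q^2(q^2+1)^2(q^2+q+1)+1\bigr)=q^{33}+q^{29}+q^{28}+3q^{27}+2q^{26}+3q^{25}+q^{24}+q^{23}+q^{21}$ correctly accounts for the leading block of the target polynomial.

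The gap is that you never exhibit the singleton part of the skeleton code. The entire content of the middle coefficients $q^{21}+2q^{19}+3q^{18}+5q^{17}+\dots+q^5$ is an explicit list of roughly $85$ pivot vectors in $\cG_1(15,4)$, each at pairwise Hamming distance at least $4$ and at distance at least $4$ from every vector in the two block vertices, whose Ferrers-diagram exponents $\nu(v)$ sum to precisely that polynomial. Describing the weighted-clique search that would find such a list is a method, not a proof: without the concrete witness there is no way to verify that a clique achieving these particular coefficients exists, and the claimed polynomial (and hence the value $10073483885$ at $q=2$) remains unestablished. The paper supplies this witness explicitly, in the compact integer encoding, and that list is the irreducible computational core of the argument.
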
  
\begin{proof}
  We choose a generalized skeleton code $\cS$ with vertices {\tiny $\left({8\choose 4},{7\choose 0}\right)$, $\left({8\choose 0},{7\choose 4}\right)$, 
 $24672$, $6240$, $12368$, $18512$, $20528$, $20552$, $1632$, $10288$, $10312$, $12328$, $24600$, $18472$, $480$, $848$, $3140$, $6168$, $1232$, $1328$, 
 $1352$, $4676$, $5156$, $5186$, $688$, $712$, $808$, $1560$, $2596$, $2626$, $3106$, $8516$, $9236$, $9281$, $24582$, $1192$, $4642$, $16580$, $16676$, 
 $16706$, $16916$, $16961$, $17420$, $17426$, $17441$, $408$, $2324$, $2369$, $3089$, $6150$, $8356$, $8386$, $8482$, $8716$, $8722$, $8737$, $9226$, 
 $12293$, $4244$, $4289$, $4364$, $4370$, $4385$, $4625$, $5129$, $16546$, $16906$, $18437$, $20483$, $1542$, $2188$, $2194$, $2209$, $2314$, $2569$, 
 $8465$, $10243$, $4234$, $16529$, $16649$, $390$, $773$, $8329$, $1157$, $1283$, and $643$, }, so that Inequality~(\ref{ie_EF_gen}) and (\ref{ie_lifted}) give
  $A_q(15,4;4)\ge 18727097+A_q(8,4;4)\cdot q^{21}+q^{21} +2q^{19} +3q^{18} +5q^{17} +q^{16} +4q^{15} +6q^{14} +11q^{13} +10q^{12} +13q^{11} +11q^{10}
  +8q^9 +3q^8 +3q^7 +2q^6 +q^5+A_q(7,4;4)$. Using $A_q(7,4;4)=A_q(7,4;3)$, Proposition~\ref{prop_7_4_3}, and Proposition~\ref{prop_8_4_4} gives the stated result.
\end{proof}
%% ef computation -> 10073483841
We remark that the previously best known lower bound was given in \cite{kurz2020lifted} with e.g.\ $A_2(15,4;4)\ge 10073483841$ for $q=2$.

\subsection{Improved packings}
\label{subsec_improved_packings}
Our next starting point for improved constructions is Lemma~\ref{lemma_construction_2}. As an example we consider the parameters $l=2$, $n_1=5$, $n_2=5$, $a_1=2$, 
$a_2=3$, $b_1=1$, and $b_2=2$, i.e., we are aiming at a lower bound for $A_q(10,4;5)$. Lemma~\ref{lemma_construction_1} and Corollary~\ref{cor_construction_1} give a 
$(10,\star,4,5)_q$ {\cdc} $\cC$ with
\begin{equation}
  \#\cC=q^{20}+\gaussmnum{5}{2}{q} \cdot \left(q^{10} - q^7 - q^6 + q^2 +q-1\right)+1,
\end{equation}
i.e., $\#\cC=1178312$ for $q=2$. For our specific choice $\bar{n}=(n_1,n_2)=(5,5)$, $\bar{a}=(a_1,a_2)=(2,3)$, and $\bar{b}=(b_1,b_2)=(1,2)$ Corollary~\ref{cor_construction_2} 
gives a $(10,\star,4,5)_q$ {\cdc} $\cD$ such that $\cC\cap\cD=\emptyset$ and $\ds(\cC\cup\cD)\ge 4$, where
$\#\cD\ge q^9$, i.e., $\#\cD\ge 512$ for $q=2$. Going back to Lemma~\ref{lemma_construction_2} the actual conditions are that the $\cD_1^j$ are $(5,\star,4,2)_q$ {\cdc}s 
for all $1\le j\le r$ with $\ds(\cD_1^j,\cD_1^{j'})\ge 2$ for all $1\le j<j'\le r$ and that the $\cD_2^j$ are $(5,\star,4,3)_q$ {\cdc}s 
for all $1\le j\le r$ with $\ds(\cD_2^j,\cD_2^{j'})\ge 2$ for all $1\le j<j'\le r$. Setting $\cD_2^j=\left(\cD_1^j\right)^\perp$ it suffices to give a construction for the $\cD_1^j$. 
The condition $\ds(\cD_1^j,\cD_1^{j'})\ge 2$ for all $1\le j<j'\le r$ just says that we can pack each of the $\gaussmnum{5}{2}{q}=q^6 + q^5 + 2q^4 + 2q^3 + 2q^2 + q + 1$ 
$2$-dimensional subspaces of $\F_q^5$ into at most one $\cD_1^j$. So, let $\cL$ be the set of all $\gaussmnum{5}{2}{q}$ $2$-dimensional subspaces of $\F_q^5$ and $j=1$. 
Now we iteratively and greedily select some large $(5,\star,4;2)_q$-subcode $\cD_1^j$ from $\cL$, remove the codewords from $\cD_1^j$ from $\cL$, and increase $j$ by $1$ until 
$\cL$ is empty. As a result we obtain $14$ codes with $\#\cD_1^j=9$ and one code $\cD_1^j$ for each cardinality in $\{1,2,5,6,7,8\}$. Note that $14\cdot 9+8+7+6+5+2+1=155$ and 
$14\cdot 9^2 + 8^2 + 7^2 + 6^2 + 5^2 + 2^2 + 1^2=1313$, so that $A_2(10,4;5)\ge 1178312+1313=1179625$. Since $A_2(5,4;2)=9$ we have $\#\cD_1^j\le 9$ and 
$\left\lfloor \gaussmnum{5}{2}{2}/9\right\rfloor=17$ implies that at most $17$ $\cD_1^j$ can have the maximum cardinality $9$. From $155-17\cdot 9=2$ we conclude 
$\sum_{j=1}^{r} \left(\#\cD_1^j\right)^2\le 17\cdot 9^2+2^2=1381$.   

\begin{Definition}
  \label{def_E_a}
  Let $l\ge 2$, $d\ge 2$ with $d\equiv 0\pmod 2$, $\bar{n}=\left(n_1,\dots,n_l\right)\in\N^l$, $n:=\sum_{i=1}^l n_i$, $\bar{a}=\left(a_1,\dots,a_l\right)$ with $a_i\ge d/2$ for all 
  $1\le i\le l$, and $k=\sum_{i=1}^l a_i$. Let $F_i$ denote the subspace spanned by the unit vectors $e_h$ for $\sum_{j=1}^{i-1} n_j<h\le \sum_{j=1}^{i} n_j$, where $1\le i\le l$.    
  By $E_q(\bar{n},\bar{a},d)$ we denote denote the maximum cardinality $M$ of an $(n,M,d,k)_q$ {\cdc} $\cD$ such that every codeword $U\in\cD$ satisfies 
  $\dim(U\cap F_i)=a_i$ for $1\le i\le l$.  
\end{Definition}

So, we e.g.\  have $E_q\big((5,5),(2,3),4\big)\ge q^9$ and $E_2\big((5,5),(2,3),4\big)\ge 1313$. The general construction strategy in our situation can be described as
\begin{eqnarray}
  A_q(10,4;5)&\ge& A_q\!\left(10,4;5;{5\choose 5},{5\choose 0}\right)+A_q\!\left(10,4;5;{5\choose \le 2},{5\choose \ge 3}\right)\notag\\ &&+E_q\big((5,5),(2,3),4\big)\label{ie_10_4_5_construction}. 
\end{eqnarray}
The advantage of such a description is that the three parts can be considered separately.

In order to improve upon Corollary~\ref{cor_construction_2} in general we have to introduce a bit more notation and state the key observation of its proof.

\begin{Lemma}(Cf.~\cite[Lemma 2.5]{lao2020parameter} and the proof of \cite[Corollary 4.5]{cossidente2019combining})
  \label{lemma_cosets}
  Let $\cF$ be a Ferrers diagram and $\cM$ be a corresponding linear {\FDRM} code with minimum rank distance $\delta$. If $\cM$ is a subcode of a 
  linear {\FDRM} code $\cM'$ with minimum rank distance $\delta'<\delta$ and Ferrers diagram $\cF$, then there exist 
  {\FDRM} codes $\cM_i$ with Ferrers diagram $\cF$ for $1\le i\le s:=\#\cM'/\#\cM$ satisfying 
  \begin{enumerate}
    \item[(1)] $\dr(\cM_i)\ge \delta$ for all $1\le i\le s$;
    \item[(2)] $\dr(\cM_i,\cM_j)\ge \delta'$ for all $1\le i<j\le s$; and
    \item[(3)] $\cM_1,\dots,\cM_r$ is a partition of $\cM'$.
  \end{enumerate} 
\end{Lemma}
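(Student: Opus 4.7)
The plan is to take $\cM_1,\dots,\cM_r$ to be the cosets of $\cM$ inside $\cM'$. Since $\cM$ is a linear subcode of the linear code $\cM'$, the additive quotient $\cM'/\cM$ has exactly $r=\#\cM'/\#\cM$ elements, so picking coset representatives $M_1,\dots,M_r\in\cM'$ and setting $\cM_i:=M_i+\cM$ immediately gives property (3), and each $\cM_i$ is still contained in the $\F_q$-span of the dots of $\cF$, hence is an {\FDRM} code with Ferrers diagram $\cF$.

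For property (1), take two distinct elements $M_i+A$ and $M_i+B$ of $\cM_i$ with $A,B\in\cM$. Their difference is $A-B$, a nonzero element of $\cM$, and therefore has rank at least $\delta$ by assumption on $\cM$. Thus $\dr(\cM_i)\ge\delta$. For property (2), fix $i\ne j$ and consider $M_i+A\in\cM_i$ and $M_j+B\in\cM_j$ with $A,B\in\cM$. Their difference is $(M_i-M_j)+(A-B)\in\cM'$ by linearity of $\cM'$. This element is nonzero, since otherwise $M_i-M_j=B-A\in\cM$, contradicting that $M_i$ and $M_j$ represent distinct cosets. Hence its rank is at least $\delta'$, the minimum rank distance of $\cM'$, which proves (2).

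There is no real obstacle here: the statement is essentially a repackaging of the standard coset decomposition argument, and the only thing to verify is that the additive (not multiplicative) structure suffices, which it does because the rank distance is translation invariant and $\cM\subseteq\cM'$ are linear. The one notational blemish in the statement is the index $s$ appearing in (2), which should read $r$; this is evident from context and the proof above covers it.
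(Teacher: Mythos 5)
Your proof is correct and follows essentially the same route as the paper: both decompose $\cM'$ into the $r$ additive cosets of $\cM$, obtain (1) from the fact that differences within a coset are nonzero elements of $\cM$, and obtain (2) from the fact that differences across distinct cosets are nonzero elements of $\cM'$. Your observation that the index $s$ in condition (2) should be $r$ is also consistent with the paper, whose own proof uses $s$ and $r$ interchangeably for the same quantity $\#\cM'/\#\cM$.
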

\begin{proof}
  For each $M'\in \cM'$ the code $\cM+M':=\{M+M'\,:\, M\in \cM\}$ is $\FDRM$ with Ferrers diagram $\cF$ and minimum rank distance $\delta$. For 
  $M',M''\in \cM'$ we have $M'+\cM=M''+\cM$ iff $M'-M''\in \cM$ and $M'+\cM\cap M''+\cM=\emptyset$ otherwise. Now let $\cM_1,\dots,\cM_s$ be the 
  $s=\#\cM'/\#\cM$ different codes $M+\cM$, which are cosets of $\cM$ in $\cM'$ and partition $\cM'$. Since all elements of $\cM_i$ and $\cM_j$ 
  are different elements of $\cM'$ we have $\dr(\cM_i,\cM_j)\ge \delta'$ for all $1\le i<j\le s$. 
\end{proof}
Choosing $\cF$ as $a\times b$ rectangular Ferrers diagram, we end up with \cite[Lemma 2.5]{lao2020parameter}. In the proof of 
\cite[Corollary 4.5]{cossidente2019combining} this lemma is indirectly applied with $a=a_i$ and $b=n_i-a_i$. By $m(q,\cF,\dr)$ we denote 
the maximum cardinality of an {\FDRM} code with Ferrers diagram $\cF$ and minimum rank distance $\dr$. This generalizes the notion 
of $m(q,m,n,\dr)$ for the cardinality of {\MRD} codes choosing $\cF$ as $m\times n$ rectangular Ferrers diagram. Note that for minimum rank 
distance $\delta=2$ the upper bound from \cite[Theorem 1]{etzion2009error}, cf.\ Theorem~\ref{thm_upper_bound_ef}, can always be attained by 
linear rank metric codes. Moreover, the only choice for $\delta'$ then is $\delta'=1$ and $\cM'$ consists of all matrices with Ferrers diagram $\cF$. 
Thus, $\cM'$ is automatically linear and contains $\cM$ as a subcode. 

Now we are ready to describe the link to Lemma~\ref{lemma_construction_2}. We write $\cF(v)$ for a Ferrers diagram whose pivot vector is given by $v$.  
Let $\cF$ be a Ferrers diagram with a pivot vector contained in $\cG_1(n_i,a_i)$. We apply Lemma~\ref{lemma_cosets} for $\delta=2$ and $\delta'=1$. With the corresponding $\cM_j$ for
$1\le j\le r:=m(q,\cF,1)/m(q,\cF,2)$ we can set
\begin{equation}
  \cD_i^j=\left\{\left\langle I_{a_i}|M\right\rangle\,:\,M\in \cM_j\right\}
\end{equation}
for $1\le j\le r$. For the sake of simplicity, let us restrict to the parameters $l=2$, $n_1=n_2$, and $a_1=a_2$. By choosing 
\begin{equation}
  \cD=\cup_{j=1}^r \left\{U\times U'\,:\, U\in \cD_1^j,U'\in \cD_2^j\right\}
\end{equation}
we obtain a code $\cD$ of cardinality $m(q,\cF,1)\cdot m(q,\cF,2)$ that goes in line with the conditions of Lemma~\ref{lemma_construction_2}. 
Choosing $\cF$ as a rectangular Ferrers diagram of maximum shape gives Corollary~\ref{cor_construction_2}. However, for minimum subspace distance $d=4$ we 
can choose the union of these codes for all possible Ferrers diagrams:
\begin{Proposition}
  \label{prop_all_ferrers_diagrams}
  $$
    E_q((n',n'),(a',a'),4)\ge \sum_{v\in \cG_1(n',a')} m(q,\cF(v),1)\cdot m(q,\cF(v),2).
  $$  
\end{Proposition}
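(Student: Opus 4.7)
The plan is to construct the desired code as a disjoint union $\cD = \bigcup_{\cF \in \cG_1(n',a')} \cD^\cF$ of subcodes indexed by Ferrers diagrams, combining the coset refinement of Lemma~\ref{lemma_cosets} inside each summand with an additive distance formula across the two factors $F_1$ and $F_2$. The starting observation is that any codeword $U$ counted by $E_q((n',n'),(a',a'),4)$ decomposes uniquely as $U = V_1 \oplus V_2$ with $V_i = U \cap F_i$ an $a'$-dimensional subspace of $F_i$, and since $F_1 \cap F_2 = \{0\}$ the uniqueness of representation in $F_1 \oplus F_2$ gives
$$
  \dim(U \cap U') = \dim(V_1 \cap V_1') + \dim(V_2 \cap V_2'),
$$
so $\ds(U,U') = \ds(V_1,V_1') + \ds(V_2,V_2')$. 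Thus $\ds(\cD) \ge 4$ follows once I ensure that, for any two distinct pairs $(V_1,V_2) \ne (V_1',V_2')$ produced by the construction, each factor contributes at least $2$ to the subspace distance, or one factor agrees while the other contributes at least $4$.

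For each Ferrers diagram $\cF \in \cG_1(n',a')$ I apply Lemma~\ref{lemma_cosets} with $\delta = 2$ and $\delta' = 1$, taking $\cM'$ to be the full (linear) {\FDRM} code of shape $\cF$ (size $m(q,\cF,1)$) and $\cM \subseteq \cM'$ a linear {\FDRM} subcode with $\dr(\cM) = 2$ and $\#\cM = m(q,\cF,2)$, which exists by the remark that the Etzion--Silberstein bound is attained by linear codes for $\delta = 2$. This partitions $\cM'$ into $r_\cF = m(q,\cF,1)/m(q,\cF,2)$ cosets $\cM_1^\cF, \dots, \cM_{r_\cF}^\cF$, each of rank distance at least $2$ and with pairwise rank distance at least $1$. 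Lifting each coset via Definition~\ref{definition_lifted} yields {\cdc}s $\cD_j^\cF \subseteq \cG_q(n',a')$ of cardinality $m(q,\cF,2)$ and minimum subspace distance at least $4$ (by Lemma~\ref{lemma_FDRM_CDC_equivalence}), which also satisfy $\ds(\cD_j^\cF, \cD_{j'}^\cF) \ge 2$ when $j \ne j'$ (by Lemma~\ref{lemma_dist_subspace_rank}). I then define
$$
  \cD^\cF = \bigcup_{j=1}^{r_\cF} \bigl\{ V_1 \oplus V_2 \,:\, V_1, V_2 \in \cD_j^\cF \bigr\},
$$
where the two copies of $\cD_j^\cF$ are embedded into $F_1$ and $F_2$ respectively. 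Its cardinality is $r_\cF \cdot m(q,\cF,2)^2 = m(q,\cF,1) \cdot m(q,\cF,2)$, matching the $\cF$-summand on the right-hand side.

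Verifying $\ds(U,U') \ge 4$ for two distinct codewords in $\cD$ splits into three cases, according to the index pair $(\cF,j)$ producing each. If both codewords come from the same $(\cF,j)$, then in each factor the two summands lie in a single $\cD_j^\cF$ of distance at least $4$, and the additive formula closes the argument. If they share $\cF$ but have $j \ne j'$, each factor contributes at least $2$ by the between-coset rank-distance bound. If the Ferrers diagrams differ, the pivot vectors $v(\cF), v(\cF')$ are distinct binary vectors of weight $a'$ and hence differ in at least two positions, so Lemma~\ref{lemma_dist_subspace_hamming} gives $\ds(V_i, V_i') \ge 2$ in each factor. The only non-routine step is the distance decomposition $\ds(U,U') = \ds(V_1,V_1') + \ds(V_2,V_2')$; once that is in hand, everything else is bookkeeping over the coset structure of Lemma~\ref{lemma_cosets} and the trivial Hamming separation of distinct weight-$a'$ pivot vectors.
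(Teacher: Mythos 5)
Your proposal is correct and follows essentially the same route as the paper: for each Ferrers diagram apply the coset partition of Lemma~\ref{lemma_cosets} with $\delta=2$, $\delta'=1$, lift, take the diagonal product $\bigcup_j \cD_1^j\times\cD_2^j$ of cardinality $m(q,\cF,1)\cdot m(q,\cF,2)$, and union over all $\cF\in\cG_1(n',a')$, using that distinct weight-$a'$ pivot vectors have Hamming distance at least $2$ in each factor. The paper leaves most of this implicit in the surrounding discussion; your write-up just makes the additive distance decomposition and the three-case analysis explicit.
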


\begin{table}[htp]
\begin{center}
  \begin{tabular}{lll}
    \hline
    pivot vector & size $m(q,\cF,2)$ & $\#$ of cosets $m(q,\cF,1)/m(q,\cF,2)$ \\  
    \hline
    $11000$ & $q^3$ & $q^3$ \\ 
    $10100$ & $q^2$ & $q^3$ \\ 
    $10010$ & $q$   & $q^3$ \\
    $10001$ & $1$   & $q^3$ \\ 
    $01100$ & $q^2$ & $q^2$ \\ 
    $01010$ & $q$   & $q^2$ \\
    $01001$ & $1$   & $q^2$ \\
    $00110$ & $1$   & $q^2$ \\
    $00101$ & $1$   & $q$   \\
    $00011$ & $1$   & $1$   \\  
    \hline
  \end{tabular}
  \caption{Data for Lemma~\ref{lemma_cosets} with $\cF\in\cG_1(5,2)$.}
  \label{table_cosets_5_2}
\end{center}      
\end{table}

For $n'=5$ and $a'=2$ we obtain, see Table~\ref{table_cosets_5_2} for the details, 
\begin{equation}
  E_q((5,5),(2,2),4)\ge q^9+q^7+q^6+q^5+q^4+q^3+2q^2+q+1,
\end{equation}
so that e.g.\ $E_2((5,5),(2,2),4)\ge 771$.

If we choose $\cD_2^j=\left(\cD_1^j\right)^\perp$, as done at the beginning of this subsection, we obtain:
\begin{Proposition}
  $$
    E_q((n',n'),(a',n'-a'),4)\ge \sum_{v\in \cG_1(n',a')} m(q,\cF(v),1)\cdot m(q,\cF(v),2).
  $$  
\end{Proposition}
For our specific parameters we obtain 
\begin{equation}
  E_q((5,5),(2,3),4)\ge q^9+q^7+q^6+q^5+q^4+q^3+2q^2+q+1,
\end{equation}
so that e.g.\ $E_2((5,5),(2,3),4)\ge 771$.

If we choose $\cD_2^j=\left(\cD_1^j\right)^\perp$, as done at the beginning of this subsection, we obtain:
\begin{Proposition}
  $$
    E_q((n',n'),(a',n'-a'),4)\ge \sum_{\cF\in \cG_1(n',a')} m(q,\cF,1)\cdot m(q,\cF,2)
  $$  
\end{Proposition}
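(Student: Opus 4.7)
The plan is to mirror the proof of Proposition~\ref{prop_all_ferrers_diagrams}, replacing the second copy of $\cD_1^j$ by its orthogonal complement so that each resulting codeword meets $F_1$ in dimension $a'$ and $F_2$ in dimension $n'-a'$. For each Ferrers diagram $\cF\in\cG_1(n',a')$ I apply Lemma~\ref{lemma_cosets} with $\delta=2,\delta'=1$, taking $\cM'$ to be the full linear {\FDRM} code of all matrices supported on $\cF$ (of size $m(q,\cF,1)$) and $\cM$ a maximal {\FDRM} subcode with rank distance $\ge 2$ (of size $m(q,\cF,2)$). This produces $r_\cF:=m(q,\cF,1)/m(q,\cF,2)$ cosets $\cM_1^\cF,\dots,\cM_{r_\cF}^\cF$, each of size $m(q,\cF,2)$, pairwise at rank distance $\ge 1$ and internally at rank distance $\ge 2$. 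Lifting each coset yields a family $\cD_{1,\cF}^j\subset F_1$ of $a'$-dimensional subspaces with pivot $v(\cF)$; after fixing an identification $F_1\cong F_2\cong\F_q^{n'}$, set $\cD_{2,\cF}^j:=(\cD_{1,\cF}^j)^{\perp}\subset F_2$, consisting of $(n'-a')$-dimensional subspaces.

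The candidate code is
$$
  \cD \;=\; \bigcup_{\cF\in\cG_1(n',a')}\,\bigcup_{j=1}^{r_\cF} \Bigl\{\,U_1+U_2 \,:\, U_1\in\cD_{1,\cF}^j,\ U_2\in\cD_{2,\cF}^j\,\Bigr\}.
$$
Each $W=U_1+U_2\in\cD$ is $n'$-dimensional with $W\cap F_1=U_1$ and $W\cap F_2=U_2$, so $W$ meets the requirement of Definition~\ref{def_E_a} for $\bar a=(a',n'-a')$. The cardinality equals $\sum_{\cF} r_\cF\cdot m(q,\cF,2)^2=\sum_{\cF} m(q,\cF,1)\cdot m(q,\cF,2)$, matching the claimed bound.

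It remains to verify $\ds(\cD)\ge 4$. For distinct codewords $W=U_1+U_2$ and $W'=U_1'+U_2'$, the direct-sum decomposition $\F_q^n=F_1\oplus F_2$ gives $\ds(W,W')=\ds(U_1,U_1')+\ds(U_2,U_2')$, so it suffices to bound each summand. If $W,W'$ share the same pair $(\cF,j)$, then Lemma~\ref{lemma_cosets}(1) combined with Lemma~\ref{lemma_dist_subspace_rank} gives $\ds(U_1,U_1')\ge 4$ whenever $U_1\ne U_1'$, and the identity $\ds(U_2,U_2')=\ds(U_2^{\perp},(U_2')^{\perp})$ transfers the same bound to the second summand. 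If $W,W'$ share $\cF$ but differ in $j$, Lemma~\ref{lemma_cosets}(2) yields rank distance $\ge 1$ on each side, so each $\ds$-summand is $\ge 2$ and the total is $\ge 4$. If $W,W'$ come from different Ferrers diagrams, Lemma~\ref{lemma_dist_subspace_hamming} gives $\ds(U_1,U_1')\ge 2$, as distinct elements of $\cG_1(n',a')$ have Hamming distance $\ge 2$; dualization transports this bound to $\ds(U_2,U_2')\ge 2$, again totalling $\ge 4$. The only delicate point is tracking the orthogonal complement consistently across the two blocks, but since $\ds$ is invariant under orthogonality this bookkeeping is harmless, and the three cases exhaust all possibilities.
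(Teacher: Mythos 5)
Your proof is correct and follows essentially the same route as the paper: lift the cosets from Lemma~\ref{lemma_cosets} with $\delta=2$, $\delta'=1$ for every $\cF\in\cG_1(n',a')$, take $\cD_2^j=(\cD_1^j)^\perp$ in the second block, and combine as in Lemma~\ref{lemma_construction_2}. The paper leaves the construction and the distance check implicit (referring back to the $(a',a')$ case), whereas you verify the three distance cases directly, but the underlying argument is the same.
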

For our specific parameters we obtain 
\begin{equation}
  E_q((5,5),(2,3),4)\ge q^9+q^7+q^6+q^5+q^4+q^3+2q^2+q+1,
\end{equation}
so that e.g.\ $E_2((5,5),(2,3),4)\ge 771$.

Let us consider the initial packing or partitioning problem again, i.e., pack or partition the $\gaussmnum{5}{2}{q}$ $2$-dimensional subspaces of $\F_q^5$ 
into {\cdc}s $\cD_1^j$ with $\ds(\cD_1^j)\ge 4$. In Proposition~\ref{prop_all_ferrers_diagrams} and Table~\ref{table_cosets_5_2} the 
$\cD_1^j$ all have the same pivot vector. Combining codewords with pivot vector $11000$ with those with pivot vector $00110$ allows us to choose 
$\#\cD_1^j=q^3+1$. However, we can choose only $\min\!\left\{q^3,q^2\right\}=q^2$ translates, i.e., different corresponding indices $j$. This leaves $q^3-q^2$ 
translates for the pivot vector $11000$. Using the packing scheme from Table~\ref{table_cosets_5_2_packed} we obtain:  
\begin{Proposition}
  \label{prop_eq_5_5_2_2}
  $$
    E_q((5,5),(2,2),4),E_q((5,5),(2,3),4)\ge q^9 + q^7 + q^6 + 7q^5 + 5q^4 + 3q^3 + 2q^2 + q + 1
  $$
\end{Proposition}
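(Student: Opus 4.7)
The plan is to strengthen the construction behind Proposition~\ref{prop_all_ferrers_diagrams} by allowing each subcode $\cD_1^j$ that feeds Lemma~\ref{lemma_construction_2} to draw codewords from several pivot vectors in $\cG_1(5,2)$, provided those pivots have pairwise disjoint supports. Two lifted FDRM cosets attached to disjoint-support pivots sit at pivot Hamming distance $4$, so Lemma~\ref{lemma_dist_subspace_hamming} guarantees subspace distance $\ge 4$ between them; combined with rank distance $\ge 2$ within each coset, every such combined $\cD_1^j$ is a valid $(5,\star,4,2)_q$ {\cdc}.

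First, for each $\cF\in\cG_1(5,2)$ I would apply Lemma~\ref{lemma_cosets} with $\delta=2$ and $\delta'=1$ to partition the $m(q,\cF,1)$ matrices of Ferrers shape $\cF$ into $m(q,\cF,1)/m(q,\cF,2)$ cosets of rank-distance-$2$ FDRM codes of size $m(q,\cF,2)$; the relevant data are exactly those in Table~\ref{table_cosets_5_2}. After lifting, every coset is a $(5,\star,4,2)_q$ {\cdc} and two distinct lifted cosets are always at subspace distance $\ge 2$: if they share a pivot the rank differs by at least one, otherwise the pivot Hamming distance is already $\ge 2$. This verifies the inter-$\cD_1^j$ hypothesis of Lemma~\ref{lemma_construction_2} with $a_i=2$ and $b_i=1$, regardless of how we group the cosets into $\cD_1^j$s.

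Next I would carry out the packing described in Table~\ref{table_cosets_5_2_packed}: pair each of the $q^2$ cosets of $00110$, each of the $q$ cosets of $00101$, and the single coset of $00011$ with a distinct coset of $11000$; pair the $q^2$ cosets of $01010$ with $q^2$ cosets of $10100$; pair the $q^2$ cosets of $01100$ with $q^2$ cosets of $10010$; pair the $q^2$ cosets of $01001$ with the remaining $q^2$ cosets of $10100$. Every pair has disjoint-support pivots, no coset is reused, and the unpaired cosets of $11000, 10100, 10010, 10001$ form singleton $\cD_1^j$s. Feeding this collection into Lemma~\ref{lemma_construction_2} with $\bar n=(5,5)$ and taking $\cD_2^j$ as either an isomorphic copy of $\cD_1^j$ on the complementary block (for $\bar a=(2,2)$, $\bar b=(1,1)$) or its orthogonal $(\cD_1^j)^\perp$ (for $\bar a=(2,3)$, $\bar b=(1,2)$) yields a code $\cD$ of cardinality $\sum_j (\#\cD_1^j)^2$ in both cases.

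The main obstacle is purely bookkeeping: showing that the sum of the paired contributions $q^2(q^3+1)^2$, $q(q^3+1)^2$, $(q^3+1)^2$, $2q^2(q^2+q)^2$, and $q^2(q^2+1)^2$ together with the singleton contributions $(q^3-q^2-q-1)q^6$, $(q^3-2q^2)q^4$, $(q^3-q^2)q^2$, and $q^3$ collapses, after collecting by powers of $q$, to the stated polynomial $q^9 + q^7 + q^6 + 7q^5 + 5q^4 + 3q^3 + 2q^2 + q + 1$. Showing optimality of the packing (in particular that pairing the ``small" pivots $\{00110, 00101, 00011\}$ with the ``large" pivot $11000$ maximizes the gain $2s_1 s_2$ per matched pair) is what motivates this particular choice, although for the lower bound only its validity is needed.
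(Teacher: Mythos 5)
Your proposal is correct and follows essentially the same route as the paper: Lemma~\ref{lemma_cosets} supplies the coset data of Table~\ref{table_cosets_5_2}, cosets attached to disjoint-support pivots are merged exactly as in the paper's packing scheme (Table~\ref{table_cosets_5_2_packed}), and Lemma~\ref{lemma_construction_2} with $\cD_2^j$ an isomorphic copy, respectively $\left(\cD_1^j\right)^\perp$, gives $\sum_j\bigl(\#\cD_1^j\bigr)^2$ for the two parameter sets. Your listed contributions do indeed sum to $q^9+q^7+q^6+7q^5+5q^4+3q^3+2q^2+q+1$, so the remaining bookkeeping checks out.
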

For $q=2$ we obtain $E_2((5,5),(2,2),4),E_2((5,5),(2,3),4)\ge 1043$. Since $1043$ is much smaller than $1313$, there still seems to be a lot of space for improvements 
for general field sizes $q$. 

\begin{table}[htp]
\begin{center}
  \begin{tabular}{lll}
    \hline
    skeleton code & size & $\#$ of used cosets \\ 
    \hline
    $\{11000,00110\}$ & $q^3+1$ & $q^2$ \\
    $\{11000,00101\}$ & $q^3+1$ & $q$ \\
    $\{11000,00011\}$ & $q^3+1$ & $1$ \\
    $\{11000\}$       & $q^3$ & $q^3-q^2-q-1$ \\      
    \hline
    $\{10100,01010\}$ & $q^2+q$ & $q^2$ \\
    $\{10100,01001\}$ & $q^2+1$ & $q^2$ \\
    $\{10100\}$       & $q^2$   & $q^3-2q^2$ \\
    \hline
    $\{01100,10010\}$ & $q^2+q$   & $q^2$ \\
    $\{10010\}$       & $q$       & $q^3-q^2$ \\
    $\{10001\}$       & $1$       & $q^3$ \\
    \hline   
  \end{tabular}
  \caption{Packing scheme for Proposition~\ref{prop_eq_5_5_2_2}.}
  \label{table_cosets_5_2_packed}
\end{center}      
\end{table}

Combining Inequality~(\ref{ie_10_4_5_construction}) with Proposition~\ref{prop_eq_5_5_2_2} gives:
\begin{Corollary}
  \begin{eqnarray*} A_q(10,4;5) &\ge& q^{20}+\gaussmnum{5}{2}{q} \cdot \left(q^{10} - q^7 - q^6 + q^2 +q-1\right)+1 \\ 
  && + q^9 + q^7 + q^6 + 7q^5 + 5q^4 + 3q^3 + 2q^2 + q + 1
  \end{eqnarray*}
\end{Corollary}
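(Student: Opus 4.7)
The plan is to read off the corollary as a direct assembly of three quantities whose separation was already set up in Inequality~(\ref{ie_10_4_5_construction}). That inequality decomposes any candidate $(10,\star,4,5)_q$ {\cdc} built by the strategy of Subsection~\ref{subsec_improved_packings} into three mutually Hamming-compatible blocks: codewords with pivot vector shape ${5\choose 5},{5\choose 0}$, codewords with pivot vector shape ${5\choose \le 2},{5\choose \ge 3}$, and the extra block of codewords produced by Lemma~\ref{lemma_construction_2} that meet each $F_i$ in the prescribed dimensions $(2,3)$. Because the corresponding pivot supports are pairwise at Hamming distance at least $d=4$, Theorem~\ref{thm_generalized_skeleton_code} guarantees that any three concrete subcodes realizing these shapes fit together.

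For the first two summands I would invoke Lemma~\ref{lemma_construction_1} and Corollary~\ref{cor_construction_1} with the parameters $l=2$, $n_1=n_2=5$, $k=5$, $d=4$, which as recorded in the text immediately before Definition~\ref{def_E_a} yields a $(10,\star,4,5)_q$ {\cdc} $\cC$ of size exactly
\[
  q^{20}+\gaussmnum{5}{2}{q} \cdot \left(q^{10} - q^7 - q^6 + q^2 +q-1\right)+1.
\]
This is precisely the lower bound for $A_q\!\left(10,4;5;{5\choose 5},{5\choose 0}\right)+A_q\!\left(10,4;5;{5\choose \le 2},{5\choose \ge 3}\right)$ supplied by that construction.

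For the third summand $E_q\big((5,5),(2,3),4\big)$ I would simply insert the bound of Proposition~\ref{prop_eq_5_5_2_2}, namely
\[
  E_q\big((5,5),(2,3),4\big)\ge q^9 + q^7 + q^6 + 7q^5 + 5q^4 + 3q^3 + 2q^2 + q + 1.
\]
Adding the two contributions and invoking Inequality~(\ref{ie_10_4_5_construction}) yields the claim.

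The only thing that could conceivably be an obstacle is checking that the additional code of size $E_q((5,5),(2,3),4)$ can indeed be appended to $\cC$ without violating the minimum distance; however this is exactly the content of Lemma~\ref{lemma_special_substructure} together with Lemma~\ref{lemma_construction_2}, which were already used in setting up~(\ref{ie_10_4_5_construction}). Thus the proof reduces to bookkeeping: substitute the two known lower bounds into the three-term inequality and sum.
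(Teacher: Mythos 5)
Your proposal is correct and follows exactly the paper's own route: the paper obtains this corollary by combining Inequality~(\ref{ie_10_4_5_construction}) with the size $q^{20}+\gaussmnum{5}{2}{q}\cdot\left(q^{10}-q^7-q^6+q^2+q-1\right)+1$ already computed from Lemma~\ref{lemma_construction_1} and Corollary~\ref{cor_construction_1}, and the bound $E_q\big((5,5),(2,3),4\big)\ge q^9+q^7+q^6+7q^5+5q^4+3q^3+2q^2+q+1$ from Proposition~\ref{prop_eq_5_5_2_2}. Your additional remark on distance compatibility via Lemma~\ref{lemma_special_substructure} and Lemma~\ref{lemma_construction_2} is precisely the justification implicit in the paper's setup of the three-term inequality.
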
     

\bigskip

Let us consider an improved construction for $A_q(12,6;6)$ as a second example. Here the desired minimum subspace distance is strictly larger than $4$, so that 
we cannot apply Proposition~\ref{prop_all_ferrers_diagrams}. However, we again end up with some kind of packing problem where we can state a slightly improved construction 
being parametric in the field size $q$. We choose $l=2$, $\bar{n}=(6,6)$ in Lemma~\ref{lemma_construction_1} and Corollary~\ref{cor_construction_1}. Taking 
Lemma~\ref{lemma_special_substructure} and Lemma~\ref{lemma_construction_2} into account we have  
\begin{eqnarray}
  A_q(12,6;6)&\ge& A_q\!\left(12,6;6;{6\choose 6},{6\choose 0}\right)+A_q\!\left(12,6;6;{6\choose \le 3},{6\choose \ge 3}\right)\notag\\ &&+E_q\big((6,6),(3,3),6\big)\label{ie_12_6_6_construction}. 
\end{eqnarray}
We remark that the previously best known lower bound for $A_q(12,6;6)$, described in \cite{cossidente2019combining}, indirectly gives 
$E_q((6,6),(3,3),6)\ge q^9+2q^3$. The corresponding packing problem is the following. Let $\cB$ be an $(6,\star,4,3)$ {\cdc} that is partitioned into $(6,\star,6,3)$ {\cdc}s 
$\cB^j$ for $1\le j\le r$, where $r\ge 1$ is a suitable integer. Then, by choosing $\cD_1^j=\cB^j$ and $\cD_2^j=\cB^j$ Lemma~\ref{lemma_construction_2} gives
$$
  E_q((6,6),(3,3),6)\ge \sum_{j=1}^r \left(\#\cB^j\right)^2.
$$
The pivot vector $111000$ gives codes of size $q^3$ in $q^3$ different cosets and the pivot vector $000111$ gives a code of size $1$ in exactly $1$ coset. So, 
choosing $\cB^1$ with skeleton code $\{111000,000111\}$ gives $\#\cB^1=q^3+1$ and the other $q^3-1$ cosets for $111000$ give codes with $\#\cB^j=q^3$ for $2\le j\le q^3$. 
Thus, we have
$$
  E_q((6,6),(3,3),6)\ge q^9+2q^3+1
$$ 
and combining Inequality~(\ref{ie_12_6_6_construction}) with Corollary~\ref{cor_construction_1} gives:
\begin{Proposition}
\begin{eqnarray*}
  A_q(12,6;6) &\ge&  q^{24}+q^{15}+q^{14}+2q^{13}+3q^{12}+3q^{11}+3q^{10}+3q^9+q^8\\ &&-q^7-2q^6-3q^5-3q^4-q^3-2q^2-q
\end{eqnarray*}
\end{Proposition}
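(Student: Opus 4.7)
The plan is to combine the three summands on the right-hand side of Inequality~(\ref{ie_12_6_6_construction}) using the ingredients lined up in the paragraph preceding the statement: the first two summands are handled by a single instance of Corollary~\ref{cor_construction_1}, the third one by the packing count that was just derived, and then the three contributions are expanded and added up.

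First, to lower bound $A_q(12,6;6;{6\choose 6},{6\choose 0})+A_q(12,6;6;{6\choose {\le 3}},{6\choose {\ge 3}})$ we apply Corollary~\ref{cor_construction_1} with $l=2$, $\bar n=(6,6)$, $k=6$, and $d=6$. Since $A_q(6,6;6)=1$, $m(q,6,6,3)=q^{24}$, and the inner summation over $r\in\{d/2,\dots,k-d/2\}$ collapses to the single value $r=3$, the contribution from these first two summands simplifies to $q^{24}+1+a(q,6,6,3,3)$.

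Second, we evaluate $a(q,6,6,3,3)$ directly from Equation~(\ref{eq_rank_distribution_additive_MRD_code}). Because $r-d_r=0$, only the term $s=0$ survives, giving $a(q,6,6,3,3)=\gaussmnum{6}{3}{q}\cdot(q^6-1)$. Expanding the Gaussian binomial as $\gaussmnum{6}{3}{q}=q^9+q^8+2q^7+3q^6+3q^5+3q^4+3q^3+2q^2+q+1$ and multiplying by $q^6-1$ produces the bulk of the middle-degree terms of the target polynomial.

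For the third summand $E_q((6,6),(3,3),6)$ we invoke the bound $q^9+2q^3+1$ derived immediately above from the explicit partition of a $(6,\star,4,3)_q$ {\cdc} into $(6,\star,6,3)_q$ subcodes: one block $\cB^1$ of size $q^3+1$ comes from joining the unique codeword of pivot $000111$ with one of the $q^3$ cosets of pivot $111000$, and each of the remaining $q^3-1$ cosets of pivot $111000$ yields a block of size $q^3$, so Lemma~\ref{lemma_construction_2} provides the sum of squares $(q^3+1)^2+(q^3-1)q^6=q^9+2q^3+1$. Summing $q^{24}+1+\gaussmnum{6}{3}{q}(q^6-1)+q^9+2q^3+1$ and collecting like powers of $q$ then yields the claimed polynomial.

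The only real obstacle is the arithmetic bookkeeping: the $\gaussmnum{6}{3}{q}(q^6-1)$ contribution spans the degrees $q^0,\dots,q^{15}$ and must be combined carefully with the low-degree packing term $q^9+2q^3+1$ so that the correct coefficients appear in every degree of the final bound.
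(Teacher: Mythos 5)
Your proposal is correct and follows exactly the paper's route: it combines Inequality~(\ref{ie_12_6_6_construction}) with Corollary~\ref{cor_construction_1} (giving $q^{24}+1+a(q,6,6,3,3)$ with $a(q,6,6,3,3)=\gaussmnum{6}{3}{q}\cdot(q^6-1)$) and with the coset-packing bound $E_q\big((6,6),(3,3),6\big)\ge q^9+2q^3+1$. One minor remark on the bookkeeping: the total $q^{24}+1+\gaussmnum{6}{3}{q}\cdot(q^6-1)+q^9+2q^3+1$ actually equals the claimed polynomial \emph{plus} $1$ (the constant terms contribute $-1+1+1=+1$), so your computation establishes the stated lower bound a fortiori rather than reproducing it exactly.
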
 
Note that $\cB=\cup_{j=1}^{q^3} \cB^j$ has size $q^6+1$, which is not too large compared to the known lower bounds for $A_q(6,4;3)$, see Proposition~\ref{prop_6_4_3}. 

From the general point of view we propose the following challenging research problem. For given parameters $n$, $d$, $d'$, $k$, and $q$ construct a $(n,\star,d,k)$ {\cdc} 
$\cB$ and a partition of $\cB$ into $(n,\star,d',k)$ {\cdc}s $\cB^j$, where $1\le j\le r$ for some integer $r$, such that 
\begin{equation}
  \sum_{j=1}^r \left(\#\cB^j\right)^2\label{eq_target}
\end{equation}
is as large as possible. Provide lower and upper bounds for (\ref{eq_target}).

For $n=6$, $d=4$, $d'=6$, $k=3$, and $q=2$ we have $A_2(6,4;3)=77$ and $A_2(6,6;3)\le 9$ so that the sum in (\ref{eq_target}) is upper bounded by $8\cdot 9^2+5^2\le 673$ while 
our best lower bound is just $1\cdot 9^2+7\cdot 8^2=529$. It is indeed possible to have several subcodes $\cB^j$ of maximum possible cardinality $9$. However, it is unclear if this 
comes at the cost of many subcodes $\cB^j$ with small cardinalities. 

\subsection{Exploiting Lemma~\ref{lemma_special_substructure} for small subspace distances}
\label{subsec_exploit}
While Lemma~\ref{lemma_construction_2} has the advantage that it allows computations in ambient spaces much smaller than the original ambient space, it has the big drawback 
that it is too wasteful if the desired minimum subspace distance is rather small. If we e.g.\ consider lower bounds for $A_q(12,4;6)$ and apply 
Lemma~\ref{lemma_construction_1} and Corollary~\ref{cor_construction_1} and with $\bar{n}=(6,6)$, then suitable choices for $\bar{a}$ in Lemma~\ref{lemma_construction_2} 
are $(2,4)$, $(3,3)$, and $(4,2)$. While we can combine $\bar{a}=(2,4)$ with $\bar{a}=(4,2)$ due to Inequality~(\ref{eq_a_bar_combination}), it turns out, 
see Section~\ref{sec_upper_bounds}, that $E_q((6,6),(2,4),6)$, $E_q((6,6),(4,2),6)$, and $E_q((6,6),(3,3),6)$ all are rather small.    

Given the notation from Lemma~\ref{lemma_special_substructure} the codewords $U$ of the additional subcode $\cD$ only have to satisfy $\dim(U\cap E_i)\ge d/2$ for all $1\le l\le 2$. 
For our chosen parameters it is sufficient if $\dim(U\cap E_1)=\dim(U\cap E_2)=2$, so that $(U\cap E_1)\times (U\cap E_2)$ is only a rather small part of $U$, which allows 
additional freedom. Here we generalize Definition~\ref{def_E_a} to:
\begin{Definition}
  \label{def_E}
  Let $l\ge 2$, $k\ge 1$, $d\ge 2$ with $d\equiv 0\pmod 2$, $\bar{n}=\left(n_1,\dots,n_l\right)\in\N^l$, and $n:=\sum_{i=1}^l n_i$. Set $\sigma_i=\sum_{j=1}^{i} n_j$ for $1 \le i \le l$ 
  and $\sigma_0 = 0$. With this, let $E_i$ denote the $(n-n_i)$-subspace of $\F_q^n$ consisting of all vectors in $\F_q^n$ that have zeroes for the coordinates between $\sigma_{i-1}+1$ 
  and $\sigma_i$ for all $1\le i\le l$. By $E_q(\bar{n},d;k)$ we denote denote the maximum cardinality $M$ of an $(n,M,d,k)_q$ {\cdc} $\cD$ such that every codeword $U\in\cD$ satisfies 
  $\dim(U\cap E_i)\ge d/2$ for $1\le i\le l$.  
\end{Definition}
With this we can state
\begin{eqnarray}
  A_q(12,4;6)&\ge& A_q\!\left(12,4;6;{6\choose 6},{6\choose 0}\right)+A_q\!\left(12,4;6;{6\choose \le 4},{6\choose \ge 2}\right)\notag\\ &&+E_q\big((6,6),4;6\big)\label{ie_12_4_6_construction}. 
\end{eqnarray}
We remark that e.g.\ \cite[Theorem 2.6]{lao2020parameter} gives
$$
  E_2((6,6),4;6)\ge 2154496.
$$ 
Further improvements can e.g.\ be found in \cite{niu2020new}.

\section{Upper bounds}
\label{sec_upper_bounds}

In an $(n,\star,d,k)$ {\cdc} $\cC$ no two codewords can contain the same $(k-d/2+1)$-dimensional subspace $F$, so that 
\begin{equation}
  \label{ie_anticode_bound}
  A_q(n,d;k)\le \frac{\gaussmnum{n}{k-d/2+1}{q}}{\gaussmnum{k}{k-d/2+1}{q}},
\end{equation}
since there are only $\gaussmnum{n}{k-d/2+1}{q}$ such subspaces $F$ and each codeword uses $\gaussmnum{k}{k-d/2+1}{q}$ of them. Inequality~(\ref{ie_anticode_bound}) 
is also known as the \emph{anticode bound}, see e.g.\ \cite{etzion2011error}.

We can refine the argument by counting subspaces per pivot vector. So for $v\in\cG_1(n,k)$ let $\cF$ denote the corresponding Ferrers diagram. 
By $m(q,\cF,1)$ we have denoted the number of $k$-dimensional subspaces $U$ of $\F_q^n$ with pivot vector $v$. Instead of $m(q,\cF,1)$ we also directly 
write $m(q,v,1)$. If $T$ is a $t$-dimensional subspace of $U$, then the pivot vector of $T$ satisfies $v(T)\in\cG_1(n,t)$ and $\supp(v(T))\subseteq \supp(v)$,  
where $\supp(v):=\{1\le i\le n\,:\, v_i\neq 0\}$ denotes the \emph{support} of $v=\left(v_1,\dots,v_n\right)\in\F_2^n$. The $\gaussmnum{k}{t}{q}$ subspaces $T$ 
of $U$ split differently on the different pivot vectors $v'\in\cG_1(n,t)$ with $\supp(v')\subseteq\supp(v)$. Nevertheless the corresponding numbers only 
depend on $v$ and $v'$ so that we denote by $m(q,v,v',1)$ the number of subspaces $T$ of an arbitrary but fixed subspace $U$ of $\F_q^n$ with $p(T)=v'$ and $p(U)=v$. 
If $\supp(v')\not\subseteq \supp(v)$, then $m(q,v,v',1)=0$ by definition. Otherwise we have
\begin{equation}
  m(q,v,v',1)=m(q,\tilde{v},1),  
\end{equation}
where $\tilde{v}$ denotes the restriction of $v'$ to $\supp(v)$. As an example we consider a subspace $U$ with pivot vector $v=(1101100)$. Here we have
\begin{eqnarray*}
  m(q,v,1100000)&=&q^4, \tilde{v}=1100,\\
  m(q,v,1001000)&=&q^3, \tilde{v}=1010, \\
  m(q,v,1000100)&=&q^2, \tilde{v}=1001, \\
  m(q,v,0101000)&=&q^2, \tilde{v}=0110,\\
  m(q,v,0100100)&=&q, \tilde{v}=0101, \text{ and} \\
  m(q,v,0001100)&=&1, \tilde{v}=0011.
\end{eqnarray*}

\begin{Proposition}
  \label{prop_ilp_upper_bound_pivots}
  For $\cV\subseteq \cG_1(n,k)$ we have that $A_q(n,d;k;\cV)$ is upper bounded by the maximum target value of the integer linear program (ILP) 
  maximizing 
  \begin{equation}
    \sum_{v\in\cV} a_v
  \end{equation}
  subject to the constraints
  \begin{equation}
    \sum_{v\in\cV} a_v\cdot m(q,v,v',1) \le m(q,v',1)
  \end{equation}  
  for all $v'\in \cG_1(n,k-d/2+1)$, where $a_v\in\cN$.
\end{Proposition}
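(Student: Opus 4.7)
The plan is to start from an optimal $(n,M,d,k,\cV)_q$ {\cdc} $\cC$ with $M = A_q(n,d;k;\cV)$ and exhibit a feasible integer point of the stated ILP whose objective value equals $M$. For each $v \in \cV$, I would set $a_v := \#\{U \in \cC : v(U) = v\}$. Non-negativity and integrality are immediate, and $\sum_{v \in \cV} a_v = \#\cC = M$, so the task reduces to verifying the covering constraints.

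To check the constraint indexed by a fixed $v' \in \cG_1(n,k-d/2+1)$, I would double count the incidence set
$$N_{v'} := \#\bigl\{(T,U) : U \in \cC,\ T \subseteq U,\ \dim T = k-d/2+1,\ v(T) = v'\bigr\}.$$
Counting by $U$ first, each codeword $U$ with $v(U) = v$ contributes exactly $m(q,v,v',1)$ incidences, since (as recorded in the paragraph before the proposition) the number of subspaces of $U$ with a prescribed pivot vector depends only on $v(U)$. Summing over $v \in \cV$ gives $N_{v'} = \sum_{v \in \cV} a_v \cdot m(q,v,v',1)$. Counting instead by $T$, the distance hypothesis $\ds(\cC) \ge d$ forces $\dim(U_1 \cap U_2) \le k - d/2$ for distinct $U_1,U_2 \in \cC$, so no $(k-d/2+1)$-dimensional subspace of $\F_q^n$ sits inside two different codewords. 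Hence each of the $m(q,v',1)$ candidate subspaces $T$ with $v(T) = v'$ contributes at most one pair to $N_{v'}$, yielding $N_{v'} \le m(q,v',1)$. Combining the two expressions produces exactly the desired constraint, so $(a_v)_{v \in \cV}$ is feasible and the ILP optimum is at least $M$.

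There is no real obstacle in the argument; the only point that requires care is the well-definedness of $m(q,v,v',1)$ as a transfer coefficient, i.e., that the number of $T \subseteq U$ with $v(T) = v'$ really depends only on $v(U)$ and $v'$. This is the identity $m(q,v,v',1) = m(q,\tilde v,1)$ stated right before the proposition, with $\tilde v$ the restriction of $v'$ to $\supp(v)$, so the bookkeeping is already in place.
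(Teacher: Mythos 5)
Your proof is correct and follows the same route as the paper: set $a_v$ to the number of codewords with pivot vector $v$ and verify each constraint by the observation that no $(k-d/2+1)$-dimensional subspace with pivot vector $v'$ lies in two distinct codewords, while each codeword with pivot vector $v$ contains exactly $m(q,v,v',1)$ of them. Your write-up merely makes the double-counting step more explicit than the paper's.
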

\begin{proof}
  Let $\cC$ be a {\cdc} attaining $A_q(n,d;k;\cV)$. By $a_v$ we denote the number of codewords of $\cC$ with pivot vector $v$, so that 
  the target function $\sum_{v\in\cV} a_v$ equals the cardinality $\#\cC$. Since each codeword with pivot vector $v$ contains exactly 
  $m(q,v,v',1)$ $(k-d/2+1)$-dimensional subspaces $T$ with pivot vector $v'$, no two codewords can contain the same such subspace $T$, and there 
  are exactly $m(q,v',1)$ such subspaces in $\F_q^n$, all inequalities for $v'\in \cG_1(n,k-d/2+1)$ are satisfied.
\end{proof}

Of course we can relax the integrality conditions $a_v\in\N$ to $a_v\in\mathbb{R}_{\ge 0}$, in order to obtain a linear program (LP), or add 
additional inequalities $\sum_{v\in\cV'} a_v \le \overline{A}_q(n,d;k;\cV')$ for subsets $\cV'\subseteq\cV$ and known upper bounds $\overline{A}_q(n,d;k;\cV')$ 
for $A_q(n,d;k;\cV')$.

We remark that the special case $\cV=\left({m\choose \le k-d/2},{{n-m}\choose{\ge d/2}}\right)$ of Proposition~\ref{prop_ilp_upper_bound_pivots} was also treated in 
\cite{kurz2020generalized}, where $m\ge k$ is an additional parameter. 

Similar ideas can also be applied to our other descriptions of subcodes. So, let parameters $l$, $\bar{n}$, $\bar{a}$, $d$, and $k=\sum_{i=1}^l a_i$ as in Definition~\ref{def_E_a} be given.
\begin{Proposition}
  Let $\bar{c}=\left(c_1,\dots,c_l\right)\in\N^l$ with $c_i\le a_i$ for $1\le i\le l$ and $\sum_{i=1}^l c_i=k-d/2+1$. Then, we have
  \begin{equation}
    E_q(\bar{n},\bar{a},d)\le \frac{\prod_{i=1}^l \gaussmnum{n_i}{c_i}{q}}{\prod_{i=1}^l \gaussmnum{a_i}{c_i}{q}}.
  \end{equation}
\end{Proposition}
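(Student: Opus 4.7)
My plan is to mimic the anticode argument behind Inequality~(\ref{ie_anticode_bound}), but to replace arbitrary $(k-d/2+1)$-dimensional subspaces by \emph{block-structured} ones that respect the ambient decomposition $\F_q^n=\bigoplus_{i=1}^l F_i$. First I would observe that, because the $F_i$ are pairwise complementary with $\dim F_i=n_i$, and because $\dim(U\cap F_i)=a_i$ with $\sum_{i=1}^l a_i=k=\dim U$, every codeword $U$ of an optimal {\cdc} realising $E_q(\bar{n},\bar{a},d)$ admits the internal direct sum decomposition
$$
U=\bigoplus_{i=1}^l (U\cap F_i).
$$

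Given $\bar{c}=(c_1,\dots,c_l)$ with $c_i\le a_i$ and $\sum_{i=1}^l c_i=k-d/2+1$, I would then call a subspace $T\subseteq\F_q^n$ \emph{$\bar{c}$-structured} if $T=\bigoplus_{i=1}^l T_i$ for some $T_i\subseteq F_i$ with $\dim T_i=c_i$. A product count shows that the number of $\bar{c}$-structured subspaces of $\F_q^n$ equals $\prod_{i=1}^l \gaussmnum{n_i}{c_i}{q}$, while each codeword $U$ contains exactly $\prod_{i=1}^l \gaussmnum{a_i}{c_i}{q}$ of them inside its own decomposition, since a $\bar{c}$-structured subspace of $U$ corresponds bijectively to a choice of $c_i$-dimensional subspace of $U\cap F_i$ for each~$i$.

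The crux is to show that every $\bar{c}$-structured subspace $T$ is contained in at most one codeword. If $T\subseteq U\cap U'$ for two distinct codewords $U,U'$, then $\dim(U\cap U')\ge \dim T=k-d/2+1$, so
$$
\ds(U,U')=2k-2\dim(U\cap U')\le d-2<d,
$$
contradicting the minimum-distance assumption. Double-counting incident pairs $(T,U)$ with $T\subseteq U$ and $T$ a $\bar{c}$-structured $(k-d/2+1)$-dimensional subspace then yields
$$
E_q(\bar{n},\bar{a},d)\cdot \prod_{i=1}^l \gaussmnum{a_i}{c_i}{q}\le \prod_{i=1}^l \gaussmnum{n_i}{c_i}{q},
$$
which rearranges to the claimed bound.

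I do not anticipate a real obstacle here: the argument is a straightforward refinement of the anticode bound using the block structure forced by Definition~\ref{def_E_a}. The only points that need mild care are verifying the direct sum decomposition of $U$ and checking that the $\bar{c}$-structured subspaces lying inside a codeword are exactly parametrised by the product of Grassmannians of $U\cap F_i$, both of which follow immediately from the complementarity of the $F_i$ and the dimension equalities $\dim(U\cap F_i)=a_i$.
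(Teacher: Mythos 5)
Your proof is correct and follows essentially the same route as the paper: both count the $(k-d/2+1)$-dimensional subspaces meeting each block $F_i$ in dimension $c_i$ (which, by the dimension constraint $\sum_i c_i = k-d/2+1$, are exactly your $\bar{c}$-structured direct sums), note that no two codewords share such a subspace by the minimum-distance argument, and conclude by double counting. Your write-up merely makes explicit the direct-sum decompositions $U=\bigoplus_i(U\cap F_i)$ and $T=\bigoplus_i T_i$ that the paper leaves implicit.
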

\begin{proof}
  Let the $F_i$, where $1\le i\le l$, as in Definition~\ref{def_E_a} and $F$ be an $(k-d/2+1)$-dimensional subspace of $\F_q^n$ with 
  $\dim(F\cap F_i)=c_i$ for $1\le i\le l$. As observed for the anticode bound, no two codewords can contain the same subspace $F$. 
  Since the total number of such subspaces is given by $\prod_{i=1}^l \gaussmnum{n_i}{c_i}{q}$ and each codeword contains 
  $\prod_{i=1}^l \gaussmnum{a_i}{c_i}{q}$ such subspaces, the upper bound follows. 
\end{proof}
For $l=1$ the statement is equivalent to Inequality~(\ref{ie_anticode_bound}). As an example we consider $\bar{n}=(6,6)$, $\bar{a}=(2,4)$, $d=4$, and $q=2$.
For $\bar{c}=(1,4)$ we obtain
$$
  E_2((6,6),(2,4),4)\le \frac{\gaussmnum{6}{1}{2}\cdot\gaussmnum{6}{4}{2}}{\gaussmnum{2}{1}{2}\cdot \gaussmnum{4}{4}{2}}=13671.
$$
Similarly, we obtain
$$
  E_2((6,6),(4,2),4)\le 13671
$$
and
$$
  E_2((6,6),(3,3),4)\le 129735
$$
for $\bar{c}=(2,3)$. Note that $E_2((6,6),(2,4),4)+E_2((6,6),(4,2),4)+E_2((6,6),(3,3),4)\le 157077$ while $E_2((6,6),4;6)\ge 2154496$.  

For $l=2$ we can also deal with the situation of Definition~\ref{def_E}. Note that the $k$-dimensional codewords $U$ have to intersect the disjoint spaces $E_1$ and $E_2$ in dimensions 
at least $d/2$ each. Thus for each $(k-d/2+1)$-dimensional subspace $T$ of $U$ we have $\dim(T\cap E_1)+\dim(T\cap E_2)\ge d/2+1$, so that:
\begin{Proposition}
  For parameters as in Definition~\ref{def_E} with $l=2$ we have $E_q(\bar{n},d;k) \le$
  $$
    \frac{\#\left\{T\le \F_q^n\,:\, \dim(U)=k-d/2+1, \dim(T\cap E_1)+\dim(T\cap E_2) \ge d/2+1\right\}}{\gaussmnum{k}{k-d/2+1}{q}}.
  $$
\end{Proposition}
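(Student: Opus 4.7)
The plan is to apply the anticode-type counting principle that produced Inequality~(\ref{ie_anticode_bound}), now specialised to the structurally restricted codes counted by $E_q(\bar{n},d;k)$. As before, no $(k-d/2+1)$-dimensional subspace $T$ of $\F_q^n$ can be contained in two distinct codewords $U,U'\in\cD$, since $T\subseteq U\cap U'$ would force $\dim(U\cap U')\ge k-d/2+1$ and hence $\ds(U,U')\le 2(d/2-1)<d$. Double-counting the incidence pairs $(T,U)$ with $T\le U$ and $\dim T=k-d/2+1$ therefore yields
$$
  \#\cD\cdot\gaussmnum{k}{k-d/2+1}{q}\;=\;\#\left\{T\le\F_q^n\,:\,\dim T=k-d/2+1,\ T\subseteq U\text{ for some }U\in\cD\right\}.
$$

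The main step will be to show that every such $T$ satisfies $\dim(T\cap E_1)+\dim(T\cap E_2)\ge d/2+1$, so that the right-hand side above is bounded by the numerator in the proposition. I fix $U\in\cD$ with $T\subseteq U$ and set $W_i:=U\cap E_i$; the hypothesis forces $\dim W_i\ge d/2$, and since $E_1\cap E_2=\{0\}$ one has $W_1\cap W_2=\{0\}$, hence $\dim(W_1+W_2)=\dim W_1+\dim W_2\ge d$. Because $T$ and $W_1+W_2$ both sit inside $U$, the standard dimension formula gives $\dim(T\cap(W_1+W_2))\ge \dim T+\dim(W_1+W_2)-\dim U\ge(k-d/2+1)+d-k=d/2+1$. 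Observing that $T\cap E_i=T\cap W_i$ (because $T\subseteq U$) and $W_1\cap W_2=\{0\}$, the target inequality $\dim(T\cap E_1)+\dim(T\cap E_2)=\dim((T\cap W_1)\oplus(T\cap W_2))\ge d/2+1$ then follows provided the direct sum $(T\cap W_1)\oplus(T\cap W_2)$ fills out $T\cap(W_1+W_2)$.

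The hard part will be this last identification, since a priori the inclusion $(T\cap W_1)\oplus(T\cap W_2)\subseteq T\cap(W_1+W_2)$ can be strict whenever some $t\in T$ decomposes along $\F_q^n=E_1\oplus E_2$ in a way whose individual components lie outside $T$. I expect to handle this either by exploiting additional structure of $U$ that forces equality of the relevant dimensions, or, more robustly, by replacing the condition $\dim(T\cap E_1)+\dim(T\cap E_2)\ge d/2+1$ with the weaker but always-valid necessary condition that $T$ intersect some product subspace $V=V_1\oplus V_2$ with $V_i\le E_i$ and $\dim V_i=d/2$ in dimension at least $d/2+1$. Once the key inequality is established, dividing through by $\gaussmnum{k}{k-d/2+1}{q}$ and taking $\#\cD=E_q(\bar{n},d;k)$ produces the stated upper bound.
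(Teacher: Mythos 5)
Your counting frame is the right one and is exactly the paper's: the anticode double count, reduced to the claim that every $(k-d/2+1)$-dimensional subspace $T$ of a codeword $U$ satisfies $\dim(T\cap E_1)+\dim(T\cap E_2)\ge d/2+1$. Your derivation of $\dim\!\big(T\cap(W_1+W_2)\big)\ge d/2+1$ with $W_i=U\cap E_i$ is correct, but the step you flag as ``the hard part'' is a genuine gap, and it cannot be closed: the inclusion $(T\cap W_1)\oplus(T\cap W_2)\subseteq T\cap(W_1+W_2)$ really can be strict, and the asserted inequality is false in general. Take $d=4$, $\bar{n}=(6,6)$, $k=6$, so that $E_1=\langle e_7,\dots,e_{12}\rangle$ and $E_2=\langle e_1,\dots,e_6\rangle$, and let $U=\langle e_9,e_{10},e_3,e_4,e_1+e_7,e_2+e_8\rangle$, which satisfies $\dim(U\cap E_i)=2=d/2$ for $i=1,2$. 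The $5$-dimensional subspace $T=\langle e_9,e_3,e_{10}+e_4,e_1+e_7,e_2+e_8\rangle$ of $U$ has $T\cap E_1=\langle e_9\rangle$ and $T\cap E_2=\langle e_3\rangle$, so $\dim(T\cap E_1)+\dim(T\cap E_2)=2<3=d/2+1$. The structural reason is that passing from $U$ to a subspace of codimension $d/2-1$ can decrease \emph{each} of the two intersection dimensions by up to $d/2-1$, so one only gets the sum bounded below by $d-2(d/2-1)=2$.

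The notable point is that the paper's entire proof of this proposition is the one-sentence assertion of precisely the inequality you could not prove (``Thus for each $(k-d/2+1)$-dimensional subspace $T$ of $U$ we have $\dim(T\cap E_1)+\dim(T\cap E_2)\ge d/2+1$''), with no further justification. So you have not merely left a hole in your own argument; you have located a hole in the paper's, and the example above shows the auxiliary claim is wrong as stated. In particular the displayed quotient bound is not established by the given reasoning, since the number of ``good'' $T$ per codeword may be smaller than $\gaussmnum{k}{k-d/2+1}{q}$. Your proposed repair is the correct way out: every $T\le U$ does satisfy $\dim\!\big(T\cap(V_1\oplus V_2)\big)\ge d/2+1$ for suitable $V_i\le E_i$ with $\dim V_i=d/2$ (take $V_i\le W_i$), and also $\dim(T\cap E_i)\ge 1$ for both $i$; counting the $(k-d/2+1)$-subspaces satisfying either of these always-valid necessary conditions yields a correct, though different, upper bound on $E_q(\bar{n},d;k)$.
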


We propose it as an open problem to formulate an upper bound for $E_q(\bar{n},d;k)$ similar to the one in Proposition~\ref{prop_ilp_upper_bound_pivots}, i.e., 
to take the different possibilities of the dimensions of the intersections $\dim(U\cap E_i)$ and $\dim(T\cap E_i)$ into account.

As a further line of research we would like to remark that the anticode bound from Inequality~(\ref{ie_anticode_bound}) can be sharpened to the so-called 
\emph{Johnson bound}
\begin{equation}
  \label{ie_johnson}
  A_q(n,d;k)\le \left\lfloor \frac{\left(q^n-1\right)\cdot A_q(n,-1,d;k-1)}{q^k-1}\right\rfloor 
\end{equation}
if $k\ge 2$, see e.g.\ \cite{etzion2011error,xia2009johnson}. If Inequality~(\ref{ie_johnson}) is applied iteratively without rounding down, then we end up with Inequality~(\ref{ie_anticode_bound}), 
see e.g.\ \cite{heinlein2017asymptotic,xia2009johnson}. Using the theory of $q^r$-divisible linear codes over $\F_q$ with respect to the Hamming distance, Inequality~(\ref{ie_johnson}) 
was further tightened in \cite[Theorem 12]{kiermaier2020lengths}. Applied iteratively, it constitutes the tightest known upper bound for 
$A_q(n,d;k)$ when $k<d/2$ and $(q,n,d,k)\neq (2,6,4,3), (2,8,6,4)$. So, the question arises if the underlying ideas of Inequality~(\ref{ie_johnson}) and its tightening in 
\cite[Theorem 12]{kiermaier2020lengths} can also be applied to conclude improved upper bounds for $A_q(n,d;k;\cV)$, $E_q(\bar{n},\bar{a},d)$, and $E_q(\bar{n},d;k)$.   

\section{Open problems for further research}
\label{sec_open_problems}
Here we briefly summarize the open problems mentioned in Section~\ref{sec_improved_constructions} and Section~\ref{sec_upper_bounds}.
\begin{itemize}
  \item Find more examples of generalized skeleton codes as in Proposition~\ref{prop_ex_generalized_skeleton_code} and Proposition~\ref{prop_ex_generalized_skeleton_code2}.
  \item Determine tighter bounds for e.g.\ $E_2\big((5,5),(2,3),4\big)$ and $E_q\big((5,5),(2,3),4\big)$.
  \item Study which {\FDRM} codes allow subcodes with properties as mentioned in Lemma~\ref{lemma_cosets}.
  \item Find improved constructions for {\cdc}s with relatively small subspace distances that can exploit the structural properties mentioned in Lemma~\ref{lemma_special_substructure}. 
  \item Provide more specialized upper bounds for subcodes appearing in constructions for {\cdc}s in the literature, cf.~Section~\ref{sec_upper_bounds}.
\end{itemize}

%%\bibliography{different_metrics_cdc}
%%\bibdata{different_metrics_cdc}
%%\bibliographystyle{abbrv}

\end{document}